\author{Heejin Ahn and Domitilla Del Vecchio}
\newtheorem{theorem}{Theorem}
\newtheorem{lemma}{Lemma}
\newtheorem{problem}{Problem}
\begin{document}

% --- Author Metadata here ---
%\conferenceinfo{HSCC}{'16 April 12-14, Vienna, Austria}
%\CopyrightYear{2007} % Allows default copyright year (20XX) to be over-ridden - IF NEED BE.
%\crdata{0-12345-67-8/90/01}  % Allows default copyright data (0-89791-88-6/97/05) to be over-ridden - IF NEED BE.
% --- End of Author Metadata ---

\title{Semi-autonomous Intersection Collision Avoidance through Job-shop Scheduling}
%\title{Alternate {\ttlit ACM} SIG Proceedings Paper in LaTeX Format\titlenote{(Produces the permission block, and copyright information). For use with SIG-ALTERNATE.CLS. Supported by ACM.}}
%\subtitle{[Extended Abstract]
%\titlenote{A full version of this paper is available as
%\textit{Author's Guide to Preparing ACM SIG Proceedings Using \LaTeX$2_\epsilon$\ and BibTeX} at \texttt{www.acm.org/eaddress.htm}}}
%
% You need the command \numberofauthors to handle the 'placement
% and alignment' of the authors beneath the title.
%
% For aesthetic reasons, we recommend 'three authors at a time'
% i.e. three 'name/affiliation blocks' be placed beneath the title.
%
% NOTE: You are NOT restricted in how many 'rows' of
% "name/affiliations" may appear. We just ask that you restrict
% the number of 'columns' to three.
%
% Because of the available 'opening page real-estate'
% we ask you to refrain from putting more than six authors
% (two rows with three columns) beneath the article title.
% More than six makes the first-page appear very cluttered indeed.
%
% Use the \alignauthor commands to handle the names
% and affiliations for an 'aesthetic maximum' of six authors.
% Add names, affiliations, addresses for
% the seventh etc. author(s) as the argument for the
% \additionalauthors command.
% These 'additional authors' will be output/set for you
% without further effort on your part as the last section in
% the body of your article BEFORE References or any Appendices.

\numberofauthors{2}

\author{
% You can go ahead and credit any number of authors here,
% e.g. one 'row of three' or two rows (consisting of one row of three
% and a second row of one, two or three).
%
% The command \alignauthor (no curly braces needed) should
% precede each author name, affiliation/snail-mail address and
% e-mail address. Additionally, tag each line of
% affiliation/address with \affaddr, and tag the
% e-mail address with \email.
%
% 1st. author
\alignauthor
Heejin Ahn\\
       \affaddr{Massachusetts Institute of Technology}\\
       \affaddr{77 Massachusetts Avenue}\\
       \affaddr{Cambridge, MA, 02139}\\
       \email{hjahn@mit.edu}
% 2nd. author
\alignauthor
Domitilla Del Vecchio\\
       \affaddr{Massachusetts Institute of Technology}\\
       \affaddr{77 Massachusetts Avenue}\\
       \affaddr{Cambridge, MA, 02139}\\
       \email{ddv@mit.edu}
%% 3rd. author
%\alignauthor Lars Th{\o}rv{\"a}ld\titlenote{This author is the
%one who did all the really hard work.}\\
%       \affaddr{The Th{\o}rv{\"a}ld Group}\\
%       \affaddr{1 Th{\o}rv{\"a}ld Circle}\\
%       \affaddr{Hekla, Iceland}\\
%       \email{larst@affiliation.org}
%\and  % use '\and' if you need 'another row' of author names
%% 4th. author
%\alignauthor Lawrence P. Leipuner\\
%       \affaddr{Brookhaven Laboratories}\\
%       \affaddr{Brookhaven National Lab}\\
%       \affaddr{P.O. Box 5000}\\
%       \email{lleipuner@researchlabs.org}
%% 5th. author
%\alignauthor Sean Fogarty\\
%       \affaddr{NASA Ames Research Center}\\
%       \affaddr{Moffett Field}\\
%       \affaddr{California 94035}\\
%       \email{fogartys@amesres.org}
%% 6th. author
%\alignauthor Charles Palmer\\
%       \affaddr{Palmer Research Laboratories}\\
%       \affaddr{8600 Datapoint Drive}\\
%       \affaddr{San Antonio, Texas 78229}\\
%       \email{cpalmer@prl.com}
}
%\additionalauthors{Additional authors: John Smith (The Th{\o}rv{\"a}ld Group,email: {\texttt{jsmith@affiliation.org}}) and Julius P.~Kumquat(The Kumquat Consortium, email: {\texttt{jpkumquat@consortium.net}}).}
\date{23 Oct 2015}
% Just remember to make sure that the TOTAL number of authors
% is the number that will appear on the first page PLUS the
% number that will appear in the \additionalauthors section.

\maketitle
\begin{abstract}
In this paper, we design a supervisor to prevent vehicle collisions at intersections. An intersection is modeled as an area containing multiple conflict points where vehicle paths cross in the future. At every time step, the supervisor determines whether there will be more than one vehicle in the vicinity of a conflict point at the same time. If there is, then an impending collision is detected, and the supervisor overrides the drivers to avoid collision. 
A major challenge in the design of a supervisor as opposed to an autonomous vehicle controller is to verify whether future collisions will occur based on the current drivers choices. This verification problem is particularly hard due to the large number of vehicles often involved in intersection collision, to the multitude of conflict points, and to the vehicles dynamics. In order to solve the verification problem, we translate the problem to a job-shop scheduling problem that yields equivalent answers. The job-shop scheduling problem can, in turn, be transformed into a mixed-integer linear program when the vehicle dynamics are first-order dynamics, and can thus be solved by using a commercial solver. 
\end{abstract}

% A category with the (minimum) three required fields
\category{I.2.8}{Artificial Intelligence}{Problem Solving, Control Methods, and Search}[Control theory, Scheduling]
\category{I.2.9}{ Artificial Intelligence}{Robotics}[Autonomous vehicles]
%A category including the fourth, optional field follows...

\terms{Theory, Algorithms}

\keywords{Intersection collision avoidance, multi-vehicle control, supervisory control, collision detection, verification, scheduling}

\section{Introduction}
\begin{figure}[t!]
\centering
\includegraphics[width = \columnwidth]{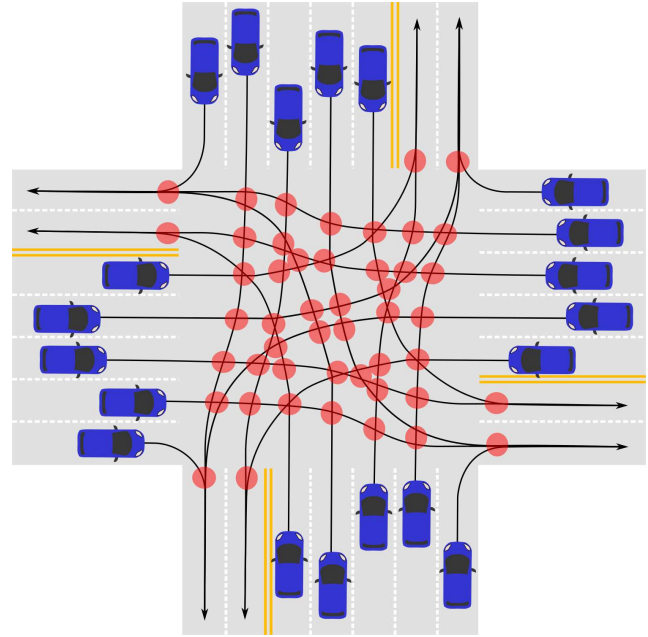}
\caption{General intersection scenario, taken from \cite{MassDOT_2012_Topcrash} to encompass the most dangerous intersections in Massachusetts, USA. This intersection contains forty eight conflict areas (small red circles). The supervisor designed in this paper can prevent collisions at the conflict areas by minimally overriding the vehicles.}
\label{figure:general_intersection}
\end{figure}

In the United States, 33,561 people lost their lives in vehicle crashes in 2012, and 26~\% of them occurred at or near intersections \cite{US_DOT_2012:overview}. This raises the need for improved safety systems that actively prevent collisions at intersections. For example, a centralized controller could be implemented on the infrastructure to coordinate vehicles near an intersection so as to prevent collisions. However, since a large number of vehicles are often involved in intersection collisions and vehicles are dynamic agents, the design of such systems faces challenges in terms of computational complexity. An additional substantial complication is that the system should override the drivers only when their driving will certainly cause a collision. That is, override actions should be minimally restrictive. This allows drivers to be in control of the vehicle unless unable to handle a dangerous situation. This supervisor can also be used as a safety guard for future fully autonomous vehicles driving in complex environment.

In this paper, we design a supervisor, which can be implemented on an infrastructure, communicating with human-driven vehicles near an intersection as shown in Figure~\ref{figure:general_intersection}. The most important and challenging part in the design is to determine whether vehicles' current driving will cause collisions at some future time. This is important because the exact collision detection, called the \textit{verification problem}, makes the supervisor least restrictive. This problem is not scalable with respect to the number of vehicles near an intersection yet their future safety must be verified every $\tau$ seconds, where $\tau$ is usually 100 ms \cite{US_plan_2015_2019}. To solve the verification problem in real-time, we formulate a job-shop scheduling problem, and prove that this is equivalent to the former problem. Although the job-shop scheduling problem is NP-hard \cite{garey_computers_1979}, we can solve this problem using a commercial solver by converting it into a mixed-integer linear programming problem. 

Mixed integer programming can handle both discrete and continuous aspects of a system. For example, collision avoidance can be formulated using discrete variables while the dynamic behaviors of vehicles, such as position and speed, are represented by continuous variables. Thus, mixed-integer programming has been employed in various collision avoidance applications such as air traffic control \cite{richards_spacecraft_2002,borrelli_milp_2006,christodoulou_automatic_2006} and multi-robot control \cite{earl_modeling_2002,grotli_path_2011}. Since the decision variables of these works are control inputs, for example, velocity, acceleration, or heading angle, at each time step within a finite time horizon, the discrete-time dynamics of vehicles are considered. As the number of time steps increases, the discretization error is diminished whereas the problem becomes larger and more difficult to solve. Because of this computational complexity, real-time verification is usually not feasible and hence not considered. Moreover, these works are cast in an autonomous framework in which if one input that satisfies the constraints is found, then it is applied. In contrast, in a semi-autonomous framework, such as ours, all admissible inputs need to be examined to determine if at least one feasible input exists.

In collision avoidance confined to an intersection, complexity can be mitigated by exploiting the fact that vehicles tend to follow predetermined paths. Given this, the intersection can be considered as a resource that all vehicles share. In \cite{kowshik_provable_2011, lee_development_2012,campos_cooperative_2014}, vehicles are assigned time slots during which they can be inside the intersection without conflict. Since the decision variables are the times at which each vehicle enters the intersection, the continuous dynamics are employed to compute these times. Notice that this approach considers $n$ decision variables if $n$ is the number of vehicles, whereas the approach in the previous paragraph considers at least $n*N$ decision variables if $N$ is the number of time steps on a finite time horizon. Because of the significantly smaller number of decision variables, the scheduling approach is computationally more efficient. The above works also assume full autonomy, which is not applicable to the scenarios considered in this paper. A detailed review of autonomous intersection management can be found in \cite{chen_cooperative_2015}.

\begin{figure}[tb!]
\centering
\includegraphics[width=\columnwidth]{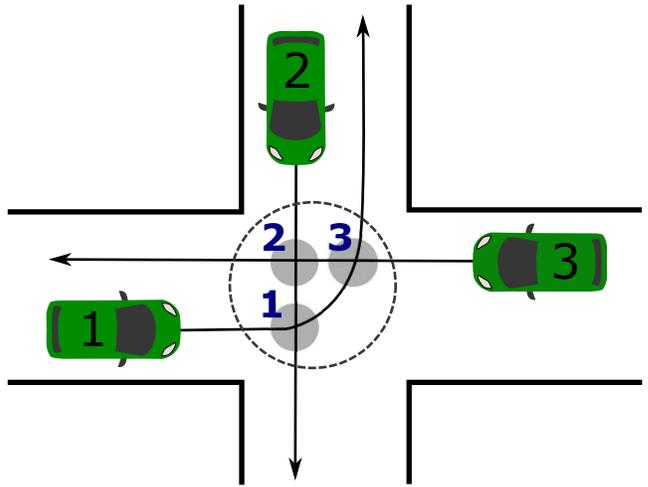}
\caption{Example of three vehicles with three conflict areas. The dashed circle represents the intersection model used in \cite{colombo_efficient_2012,colombo_least_2014}. In this paper, the intersection is modeled as multiple conflict areas as represented by the three shaded circles.}
\label{figure:scenario}
\end{figure}
A semi-autonomous framework with the scheduling approach is considered in \cite{colombo_efficient_2012,colombo_least_2014} by proving the equivalence between the verification problem and the scheduling problem. In these works, the authors design a least restrictive supervisor and restrict their attention to a special intersection scenario where all paths of vehicles intersect at one conflict area as indicated by the dashed region in Figure~\ref{figure:scenario}. While maintaining the same structure of the supervisor as in \cite{colombo_efficient_2012,colombo_least_2014}, we formulate a \textit{job-shop scheduling problem} to account for general scenarios of an intersection, where the paths of vehicles intersect at multiple points as in Figure~\ref{figure:general_intersection}. Considering multiple conflict points enables us to design a less conservative verification problem, but makes it more difficult to translate the problem to a job-shop scheduling problem. In this paper, we prove that our job-shop scheduling problem is equivalent to the verification problem with multiple conflict points. By virtue of this proof, we can solve the verification problem by solving the job-shop scheduling problem, which is computationally tractable. The job-shop scheduling problem is then transformed into a mixed-integer linear programming problem by assuming the single integrator dynamics of vehicles. Although a mixed-integer linear programming problem is NP-hard \cite{garey_computers_1979}, it can be solved by commercial solvers such as CPLEX \cite{Cplex_2009} or Gurobi \cite{Gurobi_2014}.
 
The rest of this paper is organized as follows. In Section~\ref{section:system_definition}, we introduce the intersection model and the dynamic model of vehicles. In Section~\ref{section:problem_statement}, we formally state the verification problem and the supervisor-design problem. The verification problem can be solved by formulating and solving a job-shop scheduling problem, which plays the most important role in the design of the supervisor. We then transform the job-shop scheduling problem into a mixed-integer linear programming problem to solve the job-shop scheduling problem using a commercial solver. These solutions will be given in Section~\ref{section:problem_solution}. We conclude this paper by presenting the results of computer simulations in Section~\ref{section:simulation_results} and conclusions in Section~\ref{section:conclusions}.

\section{System Definition}\label{section:system_definition}

Let us consider $n$ vehicles approaching an intersection. The vehicles follow their predetermined paths, and a point at which at least two of the paths intersect is defined as a \textit{conflict point}. Around a conflict point, we define a \textit{conflict area} to account for the size of vehicles. The intersection is modeled as a set of $m$ conflict areas as in Figures~\ref{figure:general_intersection} and \ref{figure:scenario}. Throughout this paper, vehicles and conflict areas are distinguished by integer indexes $1,\ldots, n$ and $1,\dots,m$, respectively. In order to focus only on intersection collision, we assume that there is only one vehicle per road. 

To model the longitudinal dynamics of vehicles, let $x_j\in X_j$ be the dynamic state of vehicle $j$. Let $u_j\in U_j\subset \mathbb{R}$ the control input of vehicle $j$. Then, the longitudinal dynamics are as follows:
\begin{align}\label{equation:dynamics}
\dot{x}_j=f_j(x_j,u_j), && y_j=h_j(x_j).
\end{align}
The output of the system is the position $y_j\in Y_j$ along the path. Here, $u_j$ is in a compact set, i.e., $u_j\in U_j:=[u_{j,min}, u_{j,max}]$. We assume that the output $y_j$ continuously depends on the input $u_j$. With abuse of notation, let $u_j$ denote the input signal as well as the input value in $\mathbb{R}$. The input signal $u_j\in\mathcal{U}_j$ is a function of time defined as $\{u_j(t): u_j(t)\in U_j~\text{for}~t\geq 0\}$.  

Let $x_j(t,u_j,x_j(0))$ denote the state reached after time $t$ with input signal $u_j$ starting from $x_j(0)$. Similarly, let $y_j(t,u_j,x_j(0))$ denote the position reached after time $t$ with input signal $u_j$ starting from $x_j(0)$. The aggregate state, output, input, and input signal are denoted by $\mathbf{x}\in\mathbf{X}, \mathbf{y}\in\mathbf{Y},\mathbf{u}\in\mathbf{U}$, and $\mathbf{u}\in\mathbf{\mathcal{U}}$, respectively.

One of the most important properties of the dynamic model \eqref{equation:dynamics} is the order-preserving property. That is, for $u_j(t)\leq u_j'(t)$ for all $t$, we have $x_j(t,u_j,x_j(0))\leq x_j(t,u_j',x_j(0))$ and $y_j(t,u_j,x_j(0))\leq y_j(t,u_j',x_j(0))$ for all $t\geq 0$. We will exploit this property in the design of the supervisor, particularly in formulating the job-shop scheduling problem.
	
\section{Problem Statement}\label{section:problem_statement}

Let $(\alpha_{ij}, \beta_{ij})\subset\mathbb{R}$ denote the location of conflict area $i$ along the longitudinal path of vehicle $j$. A conflict area is defined around a conflict point such that a collision occurs if more than one vehicle stay in a conflict area at the same time. That is, a collision occurs if $\mathbf{y}\in B$ where
\begin{align}\label{equation:badset}
\begin{split}
&B:=\{\mathbf{y}\in Y:~\text{for some}~j~\text{and}~j',\\
&\hspace{0.6 in} y_j\in (\alpha_{i,j}, \beta_{i,j})~\text{and}~y_{j'}\in (\alpha_{i,j'}, \beta_{i,j'})\}.
\end{split}
\end{align}
This subset of output $B$ is called the \textit{bad set}, and if $\mathbf{y}(t)\notin B$ for all $t\geq 0$, we consider the system \textit{safe}.

The verification problem is to determine if collisions at an intersection can be prevented at all future time given an initial state. We formally state this problem using the bad set \eqref{equation:badset} as follows.

\begin{problem}[Verification]\label{problem:verification}
Given $\mathbf{x}(0)$, determine if there exists $\mathbf{u}\in\mathcal{U}$ such that $\mathbf{y}(t,\mathbf{u},\mathbf{x}(0))\notin B$ for all $t\geq 0$.
\end{problem}

Now, we design a supervisor as follows. Every time $\tau$, the supervisor receives the measurements of current states of vehicles and drivers' inputs. Based on the measurements, the supervisor determines whether it must override the vehicles at this time step because otherwise there will be no admissible input to avoid collisions at the next time step. This decision can be made by solving the verification problem. 

The supervisor-design problem is formulated as follows.
\begin{problem}[Supervisor-design]\label{problem:supervisor}
	At time $k\tau$, given state $\mathbf{x}(k\tau)$ and drivers' input $\mathbf{u}_{driver}^k\in U$, design a supervisor that satisfies the following specifications.
	\begin{enumerate}[{Spec} 1.]
		\item For time $[k\tau, (k+1)\tau)$, it returns $\mathbf{u}_{driver}^k$ if there exists $\mathbf{u}\in\mathcal{U}$ such that for all $t\geq 0$ $$\mathbf{y}(t,\mathbf{u},\mathbf{x}(\tau, \mathbf{u}_{driver}^k,\mathbf{x}(k\tau)))\notin B,$$ or returns $\mathbf{u}_{safe}^k\in\mathcal{U}$ otherwise. Here, $\mathbf{u}_{safe}^k$ is defined as the safe input that guarantees the existence of $\mathbf{u}'\in\mathcal{U}$ such that for all $t\geq 0$, \begin{equation}\label{equation:safe_input_definition}
		\mathbf{y}(t,\mathbf{u}',\mathbf{x}(\tau, \mathbf{u}^k_{safe},\mathbf{x}(k\tau)))\notin B.
		\end{equation} \label{specification1}
		\item It is non-blocking, that is, $\mathbf{u}_{safe}^k$ must exist for any $k>0$ if $\mathbf{u}_{safe}^0$ exists.\label{specification2}
	\end{enumerate}
\end{problem}	

In Problem~\ref{problem:supervisor}, Spec~\ref{specification1} guarantees that the supervisor is least restrictive, and Spec~\ref{specification2} guarantees that the supervisor always has an input to override vehicles to ensure safety.

\section{Problem Solution}\label{section:problem_solution}
In this section, we solve the two problems: the verification problem (Problem~\ref{problem:verification}) and the supervisor-design problem (Problem~\ref{problem:supervisor}). As a main result, we formulate a job-shop scheduling problem and prove that this problem is equivalent to Problem~\ref{problem:verification}. Before formulating the job-shop scheduling problem in Section~\ref{section:verification_solution}, we introduce classical job-shop scheduling in Section~\ref{section:classical_job-shop}. In Section~\ref{section:verification_solution}, we also convert the job-shop scheduling problem into a mixed-integer linear programming problem with the assumption of first-order vehicle dynamics. In Section~\ref{section:supervisor}, the supervisor algorithm satisfying the specifications of Problem~\ref{problem:supervisor} is given.

\subsection{Classical job-shop scheduling}\label{section:classical_job-shop}
In classical job-shop scheduling \cite{pinedo_scheduling_2012}, $n$ jobs are processed on $m$ machines subject to the constraints that (a) each job has its own prescribed sequence of machines to follow, and (b) each machine can process at most one job at a time. 
This can be represented by a disjunctive graph with a set of nodes $\mathcal{N}$ and two sets of arcs $\mathcal{C}$ and $\mathcal{D}$. Here, the sets are defined as follows.
\begin{align*}
	&\mathcal{N}:=\{(i,j): (i,j)~\text{is the process of job $j$ on machine $i$}\\
	& \hspace{2 in}\text{for all }j\in\{1,\ldots,n\}\},\\
	&\mathcal{C}:=\{(i,j)\rightarrow (i',j): \text{job $j$ is must be processed on machine $i$}\\
	&\hspace{0.85 in}\text{and then on machine $i'$ for all }j\in\{1,\ldots,n\}\},\\
	&\mathcal{D}:=\{(i,j)\leftrightarrow (i,j'): \text{two jobs $j$ and $j'$ are to be processed}\\
	&\hspace{1.25 in}\text{on machine $i$ for all } i\in\{1,\ldots,m\}\}.
\end{align*}
The arcs in $\mathcal{C}$, called the conjunctive arcs, represent the routes of the jobs, and the arcs in $\mathcal{D}$, called the disjunctive arcs, connect two operations processed on a same machine. 

Let $\mathcal{F}\subseteq \mathcal{N}$ denote a set of the first operations of jobs, and $\mathcal{L}\subseteq \mathcal{N}$ denote a set of the last operations of jobs. If each job has only one operation on its route, $\mathcal{N}=\mathcal{F}=\mathcal{L}.$

The scenario in Section~\ref{section:system_definition} can be described in job-shop scheduling by considering vehicles as jobs and conflict areas as machines. For instance, each vehicle in Figure~\ref{figure:scenario} has its own prescribed route. Vehicle 1 crosses conflict area 1 first and then conflict area 3. At most one vehicle can be inside each conflict area at a time, because otherwise collisions occur. The corresponding disjunctive graph is shown in Figure~\ref{figure:disjunctiveGraph}.

\begin{figure}[t]
	\centering
	\includegraphics[width=0.6\columnwidth]{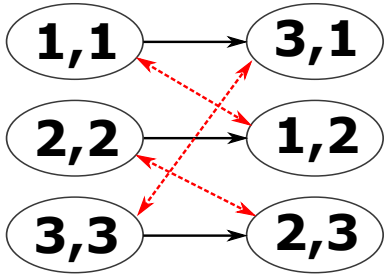}
	\caption{Disjunctive graph of the example in Figure~\ref{figure:scenario}. The black solid lines are the conjunctive arcs, and the red dotted lines are the disjunctive arcs.}
	\label{figure:disjunctiveGraph}
\end{figure}

\begin{example}
	The disjunctive graph in Figure~\ref{figure:disjunctiveGraph} consists of the set of nodes $$\mathcal{N} = \{(1,1),(3,1),(2,2),(1,2),(3,3),(2,3)\},$$ and the sets of conjunctive and disjunctive arcs\begin{align*}
		&\mathcal{C}=\{(1,1)\rightarrow(3,1), (2,2)\rightarrow(1,2), (3,3)\rightarrow(2,3)\},\\
		&\mathcal{D}=\{(1,1)\leftrightarrow(1,2),  (2,2)\leftrightarrow(2,3),(3,3)\leftrightarrow(3,1)\},
	\end{align*}
	respectively. 
	
	The sets of the first and the last operations are 	$\mathcal{F}=\{(1,1),(2,2),(3,3)\}$ and $\mathcal{L}=\{(3,1),(1,2),(2,3)\}$, respectively.
	
\end{example}

In \cite{balas_job_1998}, as a variant of job-shop scheduling, release times and deadlines are considered such that jobs must start after given release times and be finished before given deadlines. The release time $r_j$ and the deadline $d_j$ are defined for each job $j$, not for each operation $(i,j)$. The process time $p_j$ is a constant for all operations of job $j$ independent of the machines. Then, the classical job-shop scheduling problem with deadline is formulated as follows.
\begin{problem}[Classical job-shop]\label{problem:classical_jobshop}
	Given the release times $r_j$, the deadlines $d_j$, and the process time $p_j$, determine if there exists the operation starting times $t_{ij}$ for all $(i,j)\in\mathcal{N}$ such that
	\begin{align*}
		&\text{for all}~ (i,j)\in\mathcal{F}, && r_{j}\leq t_{ij},\\
		&\text{for all}~ (i,j)\in\mathcal{L}, && t_{ij}+p_{j}\leq d_{j},\\
		&\text{for all}~ (i,j)\rightarrow (i',j)\in\mathcal{C}, && t_{ij}+p_{j}\leq t_{i'j},\\
		&\text{for all}~ (i,j)\leftrightarrow (i,j')\in\mathcal{D}, && t_{ij}\leq t_{ij'}\Rightarrow t_{ij}+p_j\leq t_{ij'}.
	\end{align*}
\end{problem}

In the next section, a new job-shop scheduling problem similar to Problem~\ref{problem:classical_jobshop} is formulated to solve Problem~\ref{problem:verification}.

\subsection{Solution of Problem~\ref{problem:verification}}\label{section:verification_solution}

\subsubsection{Job-shop scheduling}\label{section:job-shop}
 In contrast to classical job-shop scheduling, our problem must account for the dynamic model of vehicles \eqref{equation:dynamics}. Thus, process times, release times, and deadlines are not initially given and not constant with operation starting times. Also, they are defined for each operation, that is, depending on the jobs and the machines as follows.

	\begin{figure}[tb!]
	\centering
	\includegraphics[width = \columnwidth]{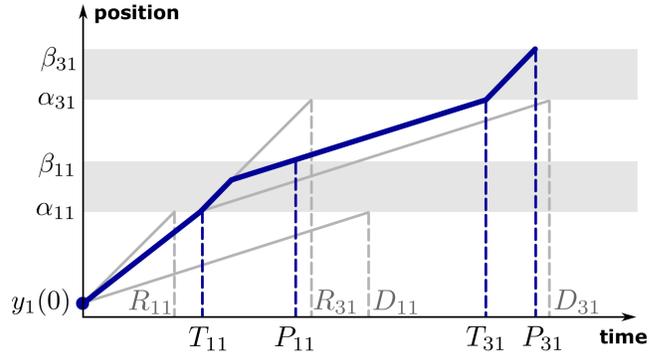}
	\caption{Process time, release time, and deadline of vehicle 1 in the example in Figure~\ref{figure:scenario}. The thick blue line represents the position of vehicle 1 on its longitudinal path with the schedules $T_{11}$ and $T_{31}$. For $(1,1)\notin\mathcal{L}$, $P_{11}$ is a function of $T_{11}$ and $T_{31}$, while for $(3,1)\in\mathcal{L}$, $P_{31}$ is a function of $T_{31}$ only. Also, for $(3,1)\notin\mathcal{F}$, $R_{31}$ and $D_{31}$ are functions of $T_{11}$.}
	\label{figure:RDP}
	\end{figure}

\begin{definition}\label{definition:job-shop}
	Given initial condition $\mathbf{x}(0)$ and schedule $\mathbf{T}:=\{T_{ij}\in\mathbb{R}: y_j(T_{ij},u_j,x_j(0))=\alpha_{ij}~\text{for some}~u_j\in\mathcal{U}_j,~\forall (i,j)\in\mathcal{N}\}$, process time $P_{ij}$ is defined for operation $(i,j)\in \mathcal{N}$ as follows.
	
	\begin{itemize}
		\item If $(i,j)\in\mathcal{L}$, for $y_j(0)<\alpha_{ij}$, 
		\begin{align}\label{definition:P_in_L}
		\begin{split}
		&P_{ij}:=\min_{u_j\in\mathcal{U}_j}\{t:y_j(t,u_j,x_j(0))=\beta_{ij}\\
		&\hspace{0.4 in}\text{with constraint}~y_j(T_{ij},u_j,x_j(0))=\alpha_{ij}\}.
		\end{split}
		\end{align}
		For $\alpha_{ij}\leq y_j(0)<\beta_{ij}$, set $P_{ij}:=\min_{u_j}\{t:y_j(t,u_j,x_j(0))=\beta_{ij}\}$. For $\beta_{ij}\leq y_j(0)$, set $P_{ij}=0$. If the constraint is not satisfied, set $P_{ij}=\infty$.
		\item If $(i,j)\notin \mathcal{L}$, that is, $\exists (i',j)$ such that $(i,j)\rightarrow(i',j)\in\mathcal{C}$, for $y_j(0)<\alpha_{ij}$,
		\begin{align}\label{definition:P_notin_L}
		\begin{split}
		&P_{ij}:=\min_{u_j\in\mathcal{U}_j}\{t:y_j(t,u_j,x_j(0))=\beta_{ij}\\
		&\hspace{0.4 in}\text{with constraint}~y_j(T_{ij},u_j,x_j(0))=\alpha_{ij}\\
		&\hspace{1.05 in}\text{and}~y_j(T_{i'j},u_j,x_j(0))=\alpha_{i'j}\}.
		\end{split}
		\end{align}
		For $\alpha_{ij}\leq y_j(0)<\beta_{ij}$, set $P_{ij}:=\min_{u_j}\{t:y_j(t,u_j,y_j(0))=\beta_{ij}~\text{with constraint}~y_j(T_{i'j},u_j,x_j(0))=\alpha_{i'j}\}$. For $\beta_{ij}\leq y_j(0)$, set $P_{ij}=0$. If the constraints are not satisfied, set $P_{ij}=\infty$.
	\end{itemize}
\end{definition}
By the above definition, process time $P_{ij}$ is the earliest time at which vehicle $j$ can exit conflict area $i$.

\begin{definition}
	Given initial condition $\mathbf{x}(0)$ and schedule $\mathbf{T}:=\{T_{ij}\in\mathbb{R}: y_j(T_{ij},u_j,x_j(0))=\alpha_{ij}~\text{for some}~u_j\in\mathcal{U}_j,~\forall (i,j)\in\mathcal{N}\}$, release time $R_{ij}$ and deadline $D_{ij}$ are defined for operation $(i,j)\in \mathcal{N}$ as follows.
	
	\begin{itemize}
		\item If $(i,j)\in\mathcal{F}$, for $y_{j}(0)<\alpha_{ij}$, 
		\begin{align}\label{definition:RD_in_F}
		\begin{split}
		&R_{ij}:=\min_{u_j\in\mathcal{U}_j}\{t:y_j(t,u_j,x_j(0))=\alpha_{ij}\},\\
		&D_{ij}:=\max_{u_j\in\mathcal{U}_j}\{t:y_j(t,u_j,x_j(0))=\alpha_{ij}\}.
		\end{split}
		\end{align}
		For $\alpha_{ij}\leq y_j(0)$, set $R_{ij}=0$ and $D_{ij}=0$.
		\item If $(i,j)\notin\mathcal{F}$, that is, $\exists(i',j)$ such that $(i',j)\rightarrow (i,j)\in\mathcal{C}$, for $y_j(0)<\alpha_{ij}$,
		\begin{align}\label{definition:RD_notin_F}
		\begin{split}
		&R_{ij}:=\min_{u_j\in\mathcal{U}_j}\{t:y_j(t,u_j,x_j(0))=\alpha_{ij}\\
		&\hspace{0.4in}\text{with constraint}~y_j(T_{i'j},u_j,x_j(0))=\alpha_{i'j}\},\\
		&D_{ij}:=\max_{u_j\in\mathcal{U}_j}\{t:y_j(t,u_j,x_j(0))=\alpha_{ij}\\
		&\hspace{0.4in}\text{with constraint}~y_j(T_{i'j},u_j,x_j(0))=\alpha_{i'j}\}.
		\end{split}
		\end{align}
		For $\alpha_{ij}\leq y_j(0)$, set $R_{ij}=0$ and $D_{ij}=0$. If the constraint cannot be satisfied by any $u_j\in\mathcal{U}_j$, set $R_{ij}=\infty$ and $D_{ij}=-\infty$.
	\end{itemize}	
	\end{definition}
	By definition, release time $R_{ij}$ is the earliest time at which vehicle $j$ can enter conflict area $i$, and deadline $D_{ij}$ is the latest such time.

	If an intersection is modeled as a single conflict point as in \cite{colombo_efficient_2012,colombo_least_2014}, the process time is defined by \eqref{definition:P_in_L}, and the release time and deadline by \eqref{definition:RD_in_F}. This is because each vehicle has a single operation so that $\mathcal{F}=\mathcal{L}=\mathcal{N}$. As for multiple conflict points, we have to include the effect of preceding and succeeding operations in the definition. Notice that the process time $P_{ij}$ in \eqref{definition:P_notin_L} depends on the schedules $T_{ij}$ and $T_{i'j}$ where $(i',j)$ is the succeeding operation of $(i,j)$, and the release time $R_{ij}$ and deadline $D_{ij}$ in \eqref{definition:RD_notin_F} depend on $T_{i'j}$ where $(i,j)$ is the preceding operation of $(i,j)$. An example of these definitions is illustrated in Figure~\ref{figure:RDP}.

Using the above definitions, we formulate the job-shop scheduling problem as follows.
\begin{problem}[Job-shop Scheduling]\label{problem:job-shop}
Given $\mathbf{x}(0)$, determine the existence of a schedule $\mathbf{T}:=\{T_{ij}:(i,j)\in\mathcal{N}\}\in\mathbb{R}_+^{|\mathcal{N}|}$ that satisfies the following constraints.
\begin{align}
&\text{for all}~(i,j)\in\mathcal{N}, && R_{ij}\leq T_{ij}\leq D_{ij},\label{constraint1:physical}\\
&\text{for all}~(i,j)\leftrightarrow (i,j') \in\mathcal{D}, && T_{ij}\leq T_{ij'}\Rightarrow P_{ij} \leq T_{ij'}.\label{constraint2:BI}
\end{align}
\end{problem}

Constraint~\eqref{constraint2:BI} implies avoidance of intersection collisions between vehicles $j$ and $j'$ by ensuring that vehicle $j$ must exit conflict area $i$ before vehicle $j'$ enters it.

We now prove that Problem~\ref{problem:verification} is equivalent to Problem~\ref{problem:job-shop}. Before this, we introduce the formal definition of the equivalence between two problems, and prove a lemma that relates Constraint~\eqref{constraint1:physical} to the existence of an input $u_j$ such that $y_j(T_{ij},u_j,x_j(0))=\alpha_{ij}$ and $y_j(P_{ij},u_j,x_j(0))=\beta_{ij}$ for $(i,j)\in\mathcal{N}$.

\begin{definition}\cite{cormen_introduction_2009}
	An instance $I_A$ of Problem A is the information required to solve the problem. If $I_A$ satisfies Problem A, we write $I_A\in$~Problem A. 
	
	Problem A is reducible to Problem B if for any instance $I_A$ of Problem A, an instance $I_B$ of Problem B can be constructed in polynomial time, and $I_A\in$~Problem A if and only if $I_B\in$~Problem B. If Problem A is reducible to Problem B, and Problem B is reducible to Problem A, then Problem A is equivalent to Problem B.
\end{definition}
%{\color{blue}
\begin{lemma}\label{lemma:existence_input}
	If $R_{ij}\leq T_{ij}\leq D_{ij}$ for all $(i,j)\in\mathcal{N}$ with $y_j(0)<\alpha_{ij}$, there exists $u_j\in\mathcal{U}_j$ such that $y_j(T_{ij},u_j,x_j(0))=\alpha_{ij}$ and $y_j(P_{ij},u_j,x_j(0))=\beta_{ij}$.
\end{lemma}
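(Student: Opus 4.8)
The plan is to construct the desired input $u_j$ by piecing together the extremal inputs that realize the various min/max defining $R_{ij}$, $D_{ij}$, and $P_{ij}$, and then invoke the order-preserving property together with continuity of $y_j$ in $u_j$ to interpolate. First I would handle the structure of vehicle $j$'s route: the operations of job $j$ form a chain $(i_1,j)\to(i_2,j)\to\cdots\to(i_\ell,j)\in\mathcal{C}$, with $(i_1,j)\in\mathcal{F}$ and $(i_\ell,j)\in\mathcal{L}$. The schedule $\mathbf{T}$ prescribes target times $T_{i_1 j}\le T_{i_2 j}\le\cdots$ (monotonicity of these along the chain should follow from Constraint~\eqref{constraint1:physical} via the conjunctive-arc definitions of $R$ and $D$), and I must produce a single signal $u_j\in\mathcal{U}_j$ with $y_j(T_{i_k j},u_j,x_j(0))=\alpha_{i_k j}$ for every $k$ and additionally $y_j(P_{i_\ell j},u_j,x_j(0))=\beta_{i_\ell j}$ for the last one.

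The core argument is a one-operation interpolation lemma that I would apply inductively along the chain. Fix an operation $(i,j)$ with $y_j(0)<\alpha_{ij}$ and suppose $R_{ij}\le T_{ij}\le D_{ij}$. Let $u_j^{\min}$ be the input achieving $R_{ij}$ (reaching $\alpha_{ij}$ as early as possible, subject to the preceding schedule constraint) and $u_j^{\max}$ the input achieving $D_{ij}$ (reaching $\alpha_{ij}$ as late as possible). By order-preservation $y_j(T_{ij},u_j^{\min},x_j(0))\ge\alpha_{ij}\ge y_j(T_{ij},u_j^{\max},x_j(0))$. Now consider the family of inputs that, on $[0,T_{ij}]$, is a pointwise convex combination $u_j^\lambda=\lambda u_j^{\max}+(1-\lambda)u_j^{\min}$: since $u_j^\lambda$ is a continuous function of $\lambda$ and $y_j$ depends continuously on the input, the intermediate value theorem yields some $\lambda^\star$ with $y_j(T_{ij},u_j^{\lambda^\star},x_j(0))=\alpha_{ij}$. (One must check $u_j^\lambda\in\mathcal{U}_j$, which holds because $U_j=[u_{j,\min},u_{j,\max}]$ is an interval, hence convex.) For the last operation, one additionally needs to hit $\beta_{i_\ell j}$ at time $P_{i_\ell j}$; here I would use that $P_{ij}$ is by definition the earliest exit time \emph{compatible with} the chosen $T$-values, so once the $\alpha$-crossings are fixed the minimizing input for \eqref{definition:P_in_L}/\eqref{definition:P_notin_L} supplies the tail of the signal, and the same order-preserving reasoning shows it is consistent with the earlier pieces.

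The step I expect to be the main obstacle is making the concatenation rigorous: the extremal input for operation $(i,j)$ is only constrained on an initial time interval, and I need to glue the interpolated prefix for one operation to the interpolated prefix for the next without violating any already-achieved crossing. The clean way is to do the interpolation globally in one shot — parametrize a single convex combination $u_j^\lambda$ of the overall ``go as fast as possible'' and ``go as slowly as possible'' signals consistent with the chain, show that as $\lambda$ sweeps $[0,1]$ each crossing time $t\mapsto$(time to reach $\alpha_{i_k j}$) moves monotonically and continuously, and argue that the assumption $R_{i_k j}\le T_{i_k j}\le D_{i_k j}$ for all $k$ guarantees the target point $(T_{i_1 j},\dots,T_{i_\ell j})$ lies in the reachable ``box'' swept out, so a common $\lambda^\star$ exists; a subtlety is that the crossing times for different $k$ are coupled, so I would instead peel off operations from the front, fixing $u_j$ on $[0,T_{i_1 j}]$ first, then arguing the remaining freedom on $[T_{i_1 j},\infty)$ still spans $[R_{i_2 j},D_{i_2 j}]$ at the next conflict area (this is exactly why $R_{i_2 j},D_{i_2 j}$ are defined \emph{with} the constraint $y_j(T_{i_1 j},u_j,x_j(0))=\alpha_{i_1 j}$), and recursing. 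Continuity and monotonicity of the hitting-time map in $\lambda$, inherited from the assumed continuous dependence of $y_j$ on $u_j$ and the order-preserving property, are the two facts that drive every step.
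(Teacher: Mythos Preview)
Your proposal is correct and follows essentially the same approach as the paper: induct along the chain $(i_1,j)\to(i_2,j)\to\cdots$, using continuous dependence of $y_j$ on $u_j$ (the paper phrases this as path-connectedness of the input space, your convex-combination interpolation is a concrete realization of the same idea) to hit $\alpha_{i_1j}$ at $T_{i_1j}$, and then exploit that $R_{i_{k+1}j},D_{i_{k+1}j}$ are defined \emph{with} the constraint $y_j(T_{i_kj},u_j,x_j(0))=\alpha_{i_kj}$ already built in to extend the input through each successive operation; the $\beta$-crossings at $P_{ij}$ then follow because the constructed input satisfies the constraints in the definition of $P_{ij}$. Your identification of the ``peel off operations from the front'' recursion and the reason it works is exactly the paper's argument, stated with more care about the concatenation than the paper itself provides.
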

\begin{proof}
	By the definitions of $R_{ij}$ and $D_{ij}$ in \eqref{definition:RD_in_F}, for the first operation $(i_1,j)$ on the route of vehicle $j$, there exists an input signal ${u}_j$ such that $y_j({T}_{i_1j},{u}_j,x_j(0)) = \alpha_{i_1j}$. This is because the input space is path-connected, and the output $y_j$ continuously depends on $u_j$. Then, for the next operation $(i_2,j)$, that is, $(i_1,j)\rightarrow (i_2,j)\in\mathcal{C}$, since the constraint in definition \eqref{definition:RD_notin_F} is satisfied by the input signal $u_j$, there is an input signal $u_j$ such that $y_j(T_{i_2j},u_j,x_j(0))=\alpha_{i_2j}$. By induction on the sequence of operations, for all $(i,j)\in\mathcal{N}$, there exists an input signal $u_j$ such that $y_j(T_{ij},u_j,y_j(0))=\alpha_{ij}$.
	
	This input signal $u_j$ satisfies the constraints in the definition of $P_{ij}$ in \eqref{definition:P_in_L} and \eqref{definition:P_notin_L}. Since there exists at least one input signal that satisfies the constraints, an input signal $u_j$ exists such that $y_j(P_{ij},u_j,x_j(0))=\beta_{ij}$ for all $(i,j)\in\mathcal{N}$.
\end{proof}
%}
\begin{theorem}\label{theorem:P1_equivalent_P3}
Problem~\ref{problem:verification} is equivalent to Problem~\ref{problem:job-shop}.
\end{theorem}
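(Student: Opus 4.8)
The plan is to establish equivalence by proving both reductions: given an instance $\mathbf{x}(0)$, we use the \emph{same} initial condition as the instance for the other problem (so the polynomial-time construction is trivial), and show that a safe input signal $\mathbf{u}$ for Problem~\ref{problem:verification} exists if and only if a feasible schedule $\mathbf{T}$ for Problem~\ref{problem:job-shop} exists. The key bridge in both directions is the observation that, because each vehicle follows its path monotonically under the order-preserving property, any input signal $u_j$ induces, for each operation $(i,j)$, a well-defined entry time $T_{ij}$ (the time $y_j$ reaches $\alpha_{ij}$) and exit time (the time $y_j$ reaches $\beta_{ij}$); conversely, Lemma~\ref{lemma:existence_input} lets us reconstruct an input signal from a schedule satisfying Constraint~\eqref{constraint1:physical}.

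For the direction ``Problem~\ref{problem:verification} $\Rightarrow$ Problem~\ref{problem:job-shop}'': suppose $\mathbf{u}\in\mathcal{U}$ keeps $\mathbf{y}(t)\notin B$ for all $t\ge 0$. For each $(i,j)\in\mathcal{N}$ with $y_j(0)<\alpha_{ij}$, define $T_{ij}$ as the (first) time $y_j(\cdot,u_j,x_j(0))$ equals $\alpha_{ij}$; for $y_j(0)\ge\alpha_{ij}$ set $T_{ij}$ appropriately (e.g. $0$). I must check three things. First, $R_{ij}\le T_{ij}\le D_{ij}$: this follows because $R_{ij}$ and $D_{ij}$ are by definition the min and max over \emph{all} admissible input signals of the time to reach $\alpha_{ij}$ subject to the preceding-operation constraint, and $u_j$ is one such signal whose own earlier entry times coincide with the $T_{i'j}$'s by construction (this consistency needs the order-preserving property and an induction along the route $\mathcal{C}$, mirroring the induction in Lemma~\ref{lemma:existence_input}). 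Second, the conjunctive ordering is automatic since $\alpha_{ij}<\alpha_{i'j}$ along a route. Third, and most importantly, Constraint~\eqref{constraint2:BI}: for a disjunctive pair $(i,j)\leftrightarrow(i,j')$ with $T_{ij}\le T_{ij'}$, I claim vehicle $j$ has already exited $(\alpha_{ij},\beta_{ij})$ by time $T_{ij'}$, i.e. $P_{ij}\le T_{ij'}$ — otherwise at the moment $y_{j'}$ enters the conflict area, $y_j$ is still inside it, putting $\mathbf{y}\in B$; and $P_{ij}$ is the \emph{earliest} possible exit time under the constraints that $u_j$ satisfies, so the actual exit time of $u_j$ is $\ge P_{ij}$, giving $P_{ij}\le(\text{exit time})\le T_{ij'}$. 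Wait — one must be careful: $P_{ij}$ being a minimum means the actual exit time is $\ge P_{ij}$, so to conclude $P_{ij}\le T_{ij'}$ I use that the actual exit time $\le T_{ij'}$, hence $P_{ij}\le T_{ij'}$. That is the crux.

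For the converse ``Problem~\ref{problem:job-shop} $\Rightarrow$ Problem~\ref{problem:verification}'': given a feasible schedule $\mathbf{T}$, apply Lemma~\ref{lemma:existence_input} to each vehicle $j$ to obtain an input signal $u_j$ with $y_j(T_{ij},u_j,x_j(0))=\alpha_{ij}$ and $y_j(P_{ij},u_j,x_j(0))=\beta_{ij}$ for every $(i,j)\in\mathcal{N}$; since the vehicles are dynamically decoupled, $\mathbf{u}=(u_1,\dots,u_n)$ is admissible. It remains to verify $\mathbf{y}(t)\notin B$ for all $t$, i.e. for each conflict area $i$ and each pair $j,j'$ sharing it, the intervals during which $y_j\in(\alpha_{ij},\beta_{ij})$ and $y_{j'}\in(\alpha_{ij'},\beta_{ij'})$ are disjoint. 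By monotonicity of $y_j$, vehicle $j$ occupies conflict area $i$ precisely over the open time interval $(T_{ij},\,\text{exit time}_j)$, and with the $u_j$ from the lemma the exit time equals $P_{ij}$ (one should confirm $u_j$ can be taken to realize this exit time simultaneously with the entry constraint — this is exactly what the lemma delivers). WLOG $T_{ij}\le T_{ij'}$; then Constraint~\eqref{constraint2:BI} gives $P_{ij}\le T_{ij'}$, so $(T_{ij},P_{ij})$ ends before $(T_{ij'},P_{ij'})$ begins, and the two occupancy intervals are disjoint. Ranging over all $i$ and all such pairs shows $\mathbf{y}(t)\notin B$ for all $t\ge 0$, so $\mathbf{u}$ solves Problem~\ref{problem:verification}.

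\textbf{Main obstacle.} The delicate point is the forward direction's handling of the schedule induced by a safe input: making rigorous that the entry times $T_{ij}$ read off from $u_j$ satisfy the recursively-defined bounds $R_{ij},D_{ij}$ (which themselves depend on the $T_{i'j}$ of the \emph{preceding} operation) and that $P_{ij}$, defined as a minimum over inputs satisfying both the $T_{ij}$ and the \emph{succeeding}-operation constraint, is genuinely a lower bound for the exit time of the given $u_j$. Both require an induction along each vehicle's route together with the order-preserving property to guarantee the relevant constrained optimization problems are feasible and that the given $u_j$ is a feasible point in each of them; handling the boundary cases $y_j(0)\ge\alpha_{ij}$ (already in or past the conflict area) cleanly is the bookkeeping-heavy part.
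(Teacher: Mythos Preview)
Your proposal is correct and follows essentially the same approach as the paper: both directions use the identical instance $\{\mathbf{x}(0),\Theta\}$, the forward direction reads off entry times $T_{ij}$ from a safe input and verifies \eqref{constraint1:physical}--\eqref{constraint2:BI} using that $P_{ij}$ is a minimum over feasible inputs (so $P_{ij}\le\text{actual exit time}\le T_{ij'}$), and the converse invokes Lemma~\ref{lemma:existence_input} to produce an input realizing the schedule and then argues disjointness of the occupancy intervals $(T_{ij},P_{ij})$. If anything, your write-up is more explicit than the paper's about the inductive consistency of the recursively defined $R_{ij},D_{ij}$ along each vehicle's route and about the boundary cases $y_j(0)\ge\alpha_{ij}$, which the paper simply assumes away.
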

\begin{proof}
An instance of Problem~\ref{problem:verification} is $\{\mathbf{x}(0),\Theta\}$, where  $\Theta=\{\{\alpha_{ij}, \beta_{ij}: \forall (i,j)\in\mathcal{N}\}, d, X,Y, U, \mathcal{U, N,F,L,C,D}\}$. 
Notice that an instance of Problem~\ref{problem:job-shop} is $\{\mathbf{x}(0),\Theta\}$, which is identical to an instance of Problem~\ref{problem:verification}. Thus, the construction of an instance takes $O(1)$ time. All we have to show is that given $I=\{\mathbf{x}(0),\Theta\}$, $I\in$~Problem~\ref{problem:verification} if and only if $I\in$~Problem~\ref{problem:job-shop}.

Suppose $I\in$~Problem~\ref{problem:verification}. Then, there exists $\tilde{\mathbf{u}}\in \mathcal{U}$ such that $\mathbf{y}(t,\tilde{\mathbf{u}},\mathbf{x}(0))\notin B$ for all $t\geq 0$. In this proof, we assume $y_j(0)<\alpha_{ij}$. For all $(i,j)\in\mathcal{N}$, let $\tilde{T}_{ij}=\{t:y_j(t,\tilde{u}_j,x_j(0))=\alpha_{ij}\}$ and $\tilde{P}_{ij} = \{t:y_j(t,\tilde{u}_j,x_j(0))=\beta_{ij}\}$. We will show that $\{\tilde{T}_{ij}:(i,j)\in\mathcal{N}\}$ satisfies the constraints in Problem~\ref{problem:job-shop} so that $\{\mathbf{x}(0),\Theta\}\in$~Problem~\ref{problem:job-shop}.

By the definitions of $R_{ij}$ and $D_{ij}$, we have $R_{ij}\leq \tilde{T}_{ij}\leq D_{ij}$ (Constraint~\eqref{constraint1:physical}). For all $(i,j)\leftrightarrow (i,j')\in\mathcal{D}$, assume without loss of generality vehicle $j$ enters conflict area $i$ before vehicle $j'$. Then we know that at $t=\tilde{P}_{ij}$, since $y_j(t,\tilde{u}_j,x_j(0))=\beta_{ij}$, we have $y_{j'}(t,\tilde{u}_j,x_j(0))\leq \alpha_{ij'}$. That is, $\tilde{P}_{ij} \leq \tilde{T}_{ij'}$. Since $\tilde{u}_j$ satisfies all the constraints given in the definitions of $P_{ij}$, we have $P_{ij}\leq \tilde{P}_{ij}$. Therefore, $P_{ij}\leq \tilde{T}_{ij'}$ (Constraint~\eqref{constraint2:BI}).

Suppose $I\in$~Problem~\ref{problem:job-shop}. Then, there exists $\hat{\mathbf{T}}$ satisfying the constraints in Problem~\ref{problem:job-shop}. By Lemma~\ref{lemma:existence_input}, there exists $\hat{\mathbf{u}}$ that satisfies $y_j(\hat{T}_{ij},\hat{u}_j,x_j(0))=\alpha_{ij}$ and $ y_j(P_{ij},\hat{u}_j,x_j(0))=\beta_{ij}$ for all $(i,j)\in\mathcal{N}$. In Constraint~\eqref{constraint2:BI}, for all $(i,j)\leftrightarrow(i,j')\in \mathcal{D}$, we have $P_{ij}\leq \hat{T}_{ij'}$ if $\hat{T}_{ij}\leq \hat{T}_{ij'}$. Then, at $t={P}_{ij'}$, we have $y_j(t,\hat{u}_j,x_j(0))=\beta_{ij}$ while $y_{j'}(t,\hat{u}_{j'},x_{j'}(0))\leq\alpha_{i'j}$. This implies that any two vehicles never meet inside a conflict area, that is, $\mathbf{y}(t,\hat{\mathbf{u}},\mathbf{x}(0))\notin B$ for all $t\geq 0$. 

Therefore, there exists $\hat{\mathbf{u}}$ such that $\mathbf{y}(t,\hat{\mathbf{u}},\mathbf{x}(0))\notin B$ for all $t\geq 0$.
\end{proof}

By Theorem~\ref{theorem:P1_equivalent_P3}, we can solve Problem~\ref{problem:verification} by solving Problem~\ref{problem:job-shop}. One may notice that Problem~\ref{problem:job-shop} is similar to the classical job-shop scheduling problem (Problem~\ref{problem:classical_jobshop}) if $D_{ij}=d_j-p_j$ and $P_{ij}=t_{ij}+p_j$. However, in Problem~\ref{problem:job-shop}, the release times, deadlines, and process times are defined for each operation as functions of the schedules. The fact that they vary depending on the schedules significantly complicates the problem. We thus cannot directly employ the solutions from the scheduling literature. Instead, we have to formulate a mixed-integer linear programming problem, which is proved to yield the equivalent answers to Problem~\ref{problem:job-shop} by assuming that the vehicle dynamics are single integrator dynamics.

\subsubsection{Mixed-integer programming}
Problem~\ref{problem:job-shop} can be transformed into a mixed-integer programming problem, which is an optimization problem subject to equality and inequality constraints in the presence of continuous and discrete variables. Notice that Constraint~\eqref{constraint1:physical} is already an inequality constraint. However, Constraint~\eqref{constraint2:BI} contains a disjunctive constraint, which can be converted into linear inequalities by introducing a binary variable $k_{ijj'}\in\{0,1\}$ and using the big-$M$ method \cite{grossmann_generalized_2012}. In particular, define $$k_{ijj'}:=\begin{cases}
1 & \begin{split}
\text{if vehicle}~j~\text{crosses conflict}&~\text{area}~i\\&\text{before vehicle}~j',
\end{split}\\ 
0 & \text{otherwise}.
\end{cases}$$
Also, let $M$ be a large positive constant in $\mathbb{R}$. Then Constraint~\eqref{constraint2:BI} can be rewritten as follows:
\begin{align}
\begin{split}\label{constraint:BI_mip}
\text{for all}~&(i,j)\leftrightarrow (i,j') \in\mathcal{D},\\
&k_{ijj'}+k_{ij'j}=1, ~~k_{ijj'},k_{ij'j}\in\{0,1\}\\
&P_{ij} \leq T_{ij'} + M(1-k_{ijj'}), \\
&P_{ij'}\leq T_{ij}+M(1-k_{ij'j}),
\end{split}
\end{align}
for $M$ sufficiently larger than $T_{ij}$ and $P_{ij}$ for all $(i,j)\in\mathcal{N}$. If $k_{ijj'}=1$ and $k_{ij'j}=0$, vehicle $j$ crosses conflict area $i$ before vehicle $j'$ so that $T_{ij}\leq T_{ij'}$. Then, $P_{ij}\leq T_{ij'}$ is imposed while $P_{ij'}\leq T_{ij}+M$ is automatically satisfied because of a sufficiently large $M$. Thus, \eqref{constraint:BI_mip} encodes the same constraint as \eqref{constraint2:BI}.

Notice that because $R_{ij}, D_{ij},$ and $P_{ij}$ are functions of variable $T_{ij}$, Problem~\ref{problem:job-shop} with Constraint~\eqref{constraint1:physical} and \eqref{constraint:BI_mip} is a general mixed-integer program (MIP). Due to its high complexity, this formulation is usually difficult to solve \cite{bussieck_minlp_2010}. If the constraints can be expressed in a linear function of variables, the problem becomes a mixed-integer linear program (MILP). Although MILP are combinatorial, several algorithmic approaches are available to solve medium to large size application problems \cite{floudas1995nonlinear}.

To this end, we assume that the longitudinal dynamics of vehicles are modeled as a single integrator as follows. For vehicle $j$,
\begin{align}\label{equation:single_integrator}
\dot{x}_j=u_j, && y_j=x_j.
\end{align}
Notice that the dynamic state $x_j\in X_j\subseteq \mathbb{R}$ is the position, and the control input $u_j\in U_j$ is the speed. Since vehicles do not go in reverse, we let $u_{j,min}>0$.  

With the first order dynamic model \eqref{equation:single_integrator}, we can transform Problem~\ref{problem:job-shop} into a mixed-integer linear programming problem.
Let us write $P_{ij}=T_{ij}+\min_{u_j} \{t:y_j(t,u_j,\alpha_{ij})=\beta_{ij}\}$ so that the constraint that $y_j(T_{ij},u_j,x_j(0))=\alpha_{ij}$ is automatically satisfied. By defining $$p_{ij}:=\{t:y_j(t,u_j,\alpha_{ij})=\beta_{ij}\},$$ $p_{ij}$ corresponds to the time spent inside conflict area $i$, independent of $T_{ij}$. Then, the variables for the mixed-integer linear programming problem are as follows:
\begin{itemize}
	\item $T_{ij}$ for $(i,j)\in\mathcal{N}$, continuous variables,
	\item $p_{ij}$ for $(i,j)\notin \mathcal{L}$, continuous variables,
	\item $k_{ijj'}$ and $k_{ij'j}$ for $(i,j)\leftrightarrow(i,j')\in\mathcal{D}$, binary variables.
\end{itemize}
Notice that $p_{ij}$ for $(i,j)\in \mathcal{L}$ is excluded from the variables because we can set $p_{ij}=(\beta_{ij}-\alpha_{ij})/u_{max}$. This is possible because $P_{ij}=T_{ij}+(\beta_{ij}-\alpha_{ij})/u_{max}$ by definition~\eqref{definition:P_in_L}, and  the minimum $p_{ij}$ is most likely to satisfy the problem formulated in the following paragraph.

Given the single integrator dynamics, we formulate the mixed-integer linear programming problem as follows.

\begin{problem}\label{problem:milp}
	Given $\mathbf{x}(0)$, determine if there exists a feasible solution subject to the following constraints.
	
\begin{enumerate}[A.]
	\item If $(i,j)\in \mathcal{F}$, for $y_j(0)<\alpha_{ij}$
	$$\frac{\alpha_{ij}-y_j(0)}{u_{j,max}}\leq T_{ij}\leq \frac{\alpha_{ij}-y_j(0)}{u_{j,min}}.$$
	\label{constraint1:firstoperation}
	For $\alpha_{ij}\leq y_j(0)$, consider $T_{ij}=0$.
	\item If $(i,j)\notin \mathcal{F}$, that is $\exists (i',j)$ such that $(i',j)\rightarrow (i,j)\in\mathcal{C}$,
	$$T_{i'j}+p_{i'j}+\frac{\alpha_{ij}-\beta_{i'j}}{u_{j,max}}\leq T_{ij}\leq T_{i'j}+p_{i'j}+\frac{\alpha_{ij}-\beta_{i'j}}{u_{j,min}}.$$
	\label{constraint2:sequential}
	
	\item If $(i,j)\notin \mathcal{L}$, for $y_j(0)<\alpha_{ij}$,
	$$\frac{\beta_{ij}-\alpha_{ij}}{u_{j,max}}\leq p_{ij}\leq \frac{\beta_{ij}-\alpha_{ij}}{u_{j,min}}.$$
	\label{constraint3:process}
	For $\alpha_{ij}\leq y_j(0)$, consider instead $\frac{\beta_{ij}-y_j(0)}{u_{j,max}}\leq p_{ij}\leq \frac{\beta_{ij}-y_{j}(0)}{u_{j,min}}$. If $\beta_{ij}\leq y_j(0)$, the schedule of operation $(i,j)$ is not of interest.
	
	\item For all $(i,j)\leftrightarrow (i,j')\in\mathcal{D}$, with a large number $M\in\mathbb{R}_+$,
	\begin{align*}
	&T_{ij}+p_{ij}\leq T_{ij'}+M(1-k_{ijj'}),\\
	&T_{ij'}+p_{ij'}\leq T_{ij} + M(1-k_{ij'j}),\\
	&k_{ijj'}+k_{ij'j} = 1.
	\end{align*}
	\label{constraint4:disjunctive}
\end{enumerate}
\end{problem}

We now prove that this problem yields equivalent answers to the job-shop scheduling problem (Problem~\ref{problem:job-shop}) with the first-order dynamics.

\begin{theorem}\label{theorem:P3_equivalent_P4}
	If the vehicle dynamics \eqref{equation:dynamics} are modeled as \eqref{equation:single_integrator}, Problem~\ref{problem:job-shop} is equivalent to Problem~\ref{problem:milp}.
\end{theorem}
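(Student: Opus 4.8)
The plan is to follow the same reduction template used in the proof of Theorem~\ref{theorem:P1_equivalent_P3}. Since an instance of Problem~\ref{problem:milp} carries exactly the data $\{\mathbf{x}(0),\Theta\}$ of an instance of Problem~\ref{problem:job-shop}, now with the dynamics fixed to \eqref{equation:single_integrator}, the construction of one instance from the other is immediate and takes $O(1)$ time. Hence the whole content is to show that, for a common instance $I$, $I\in$~Problem~\ref{problem:job-shop} if and only if $I\in$~Problem~\ref{problem:milp}. My strategy is first to compute the release times, deadlines, and process times of Problem~\ref{problem:job-shop} in closed form under the single integrator model, and then to verify that, after this substitution, the feasible set of Problem~\ref{problem:job-shop} in the variables $\mathbf{T}$ coincides with the projection of the feasible set of Problem~\ref{problem:milp} onto the $\mathbf{T}$-coordinates.

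First I would exploit that \eqref{equation:single_integrator} with $u_{j,min}>0$ makes $y_j$ strictly increasing, so every extremal arrival or departure time in the definitions of $R_{ij},D_{ij},P_{ij}$ is attained by a constant-speed trajectory at $u_{j,max}$ or $u_{j,min}$. This yields, for $(i,j)\in\mathcal{F}$, $R_{ij}=(\alpha_{ij}-y_j(0))/u_{j,max}$ and $D_{ij}=(\alpha_{ij}-y_j(0))/u_{j,min}$ from \eqref{definition:RD_in_F}, matching Constraint~\ref{constraint1:firstoperation}; for $(i,j)\in\mathcal{L}$, $P_{ij}=T_{ij}+(\beta_{ij}-\alpha_{ij})/u_{j,max}$ from \eqref{definition:P_in_L}; and, writing $p_{ij}$ for the in-area crossing time, the range of Constraint~\ref{constraint3:process}. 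The coupled cases require more care: for $(i,j)\notin\mathcal{F}$ the constraint $y_j(T_{i'j},u_j,x_j(0))=\alpha_{i'j}$ in \eqref{definition:RD_notin_F} propagates the extremal speeds from the predecessor $(i',j)$, and for $(i,j)\notin\mathcal{L}$ the extra constraint on the successor $(i',j)$ in \eqref{definition:P_notin_L} can force the vehicle to dwell in area $i$ longer than its free-run minimum. I expect to obtain $P_{ij}=\max\{T_{ij}+(\beta_{ij}-\alpha_{ij})/u_{j,max},\,T_{i'j}-(\alpha_{i'j}-\beta_{ij})/u_{j,min}\}$, which is exactly the smallest value of the exit time $T_{ij}+p_{ij}$ compatible with Constraints~\ref{constraint2:sequential} and \ref{constraint3:process}.

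For the forward direction I would take a feasible schedule $\mathbf{T}$ for Problem~\ref{problem:job-shop}, keep the same $\mathbf{T}$, set $p_{ij}:=P_{ij}-T_{ij}$ for $(i,j)\notin\mathcal{L}$ (which lies in the interval of Constraint~\ref{constraint3:process}), and set $k_{ijj'}=1$ exactly when $T_{ij}\le T_{ij'}$. Constraint~\eqref{constraint1:physical} then gives Constraints~\ref{constraint1:firstoperation} and \ref{constraint2:sequential} through the closed forms above, and Constraint~\eqref{constraint2:BI} gives Constraint~\ref{constraint4:disjunctive} by the big-$M$ argument already established for \eqref{constraint:BI_mip}, because $P_{ij}=T_{ij}+p_{ij}$ under this choice. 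For the reverse direction I would take a feasible $(\mathbf{T},\mathbf{p},\mathbf{k})$ of Problem~\ref{problem:milp} and discard $\mathbf{p}$ and $\mathbf{k}$: eliminating the predecessor dwell $p_{i'j}$ from Constraint~\ref{constraint2:sequential} over its admissible range in Constraint~\ref{constraint3:process} recovers $R_{ij}\le T_{ij}\le D_{ij}$, since the endpoints $p_{i'j}=(\beta_{i'j}-\alpha_{i'j})/u_{j,max}$ and $(\beta_{i'j}-\alpha_{i'j})/u_{j,min}$ reproduce $R_{ij}$ and $D_{ij}$, giving Constraint~\eqref{constraint1:physical}; by Lemma~\ref{lemma:existence_input} the recovered $\mathbf{T}$ is then a legitimate schedule, and Constraint~\ref{constraint4:disjunctive} with $T_{ij}+p_{ij}\ge P_{ij}$ yields Constraint~\eqref{constraint2:BI}.

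The main obstacle is the coupling introduced because $p_{ij}$ is shared between the successor-linking Constraint~\ref{constraint2:sequential} and the collision Constraint~\ref{constraint4:disjunctive}. I must show that the least-restrictive choice for collision avoidance, namely the smallest exit time $T_{ij}+p_{ij}$ that is still feasible for Constraints~\ref{constraint2:sequential} and \ref{constraint3:process}, equals precisely $P_{ij}$ as defined in \eqref{definition:P_notin_L}, and that, since Constraint~\ref{constraint4:disjunctive} is monotone in the exit time, feasibility of $\mathbf{p}$ in Problem~\ref{problem:milp} is equivalent to Constraint~\eqref{constraint2:BI} holding with this minimal $P_{ij}$. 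Establishing this identity, via the order-preserving property and $u_{j,min}>0$, is the step that makes the decoupled linear formulation of Problem~\ref{problem:milp} faithful to the schedule-dependent quantities of Problem~\ref{problem:job-shop}; the remaining verifications are routine constant-speed computations.
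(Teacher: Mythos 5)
Your proposal is correct, and its skeleton matches the paper's: identical instances $\{\mathbf{x}(0),\Theta\}$, $O(1)$ construction, and your direction $I\in$~Problem~\ref{problem:milp} $\Rightarrow I\in$~Problem~\ref{problem:job-shop} is essentially the paper's own first direction (closed-form $R_{ij},D_{ij}$ obtained by eliminating the predecessor dwell over its admissible range, and $P_{ij}\leq T_{ij}+p_{ij}$ feeding the disjunctive constraint). Where you genuinely diverge is the other direction. The paper proves the contrapositive $I\notin$~Problem~\ref{problem:milp} $\Rightarrow I\notin$~Problem~\ref{problem:job-shop}: it notes that without Constraint~\ref{constraint3:process} the program is always feasible (max-speed schedule), then argues that the maximal feasible process times must violate Constraint~\ref{constraint3:process}, hence no admissible $P_{ij}$ and no feasible schedule. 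You instead argue directly: from a feasible schedule $\mathbf{T}$ you build a feasible MILP point by setting $p_{ij}:=P_{ij}-T_{ij}$ and fixing $\mathbf{k}$ by the time order, resting on the identity $P_{ij}=\max\{T_{ij}+(\beta_{ij}-\alpha_{ij})/u_{j,max},\;T_{i'j}-(\alpha_{i'j}-\beta_{ij})/u_{j,min}\}$, which is indeed valid under \eqref{equation:single_integrator} because the interpolation constraints at $T_{ij}$ and $T_{i'j}$ decouple the speed profile on the two segments. This buys two things: it makes fully explicit how the schedule-dependent $P_{ij}$ of Problem~\ref{problem:job-shop} is faithfully represented by the decoupled variable $p_{ij}$, and it is tighter than the paper's rather loose ``maximum process time'' argument; it also sidesteps a small imprecision in the paper, which asserts the equality $R_{ij}=\tilde{T}_{i'j}+\tilde{p}_{i'j}+(\alpha_{ij}-\beta_{i'j})/u_{j,max}$ even though this holds only when $\tilde{p}_{i'j}$ sits at its lower endpoint (your endpoint-elimination phrasing correctly needs only the inequality). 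Two routine points to spell out when writing it up: the upper bound $p_{ij}\leq(\beta_{ij}-\alpha_{ij})/u_{j,min}$ of Constraint~\ref{constraint3:process} for your choice of $p_{ij}$ follows from the successor deadline $T_{i'j}\leq D_{i'j}$ in Constraint~\eqref{constraint1:physical}, and ties $T_{ij}=T_{ij'}$ need a tie-break so that $k_{ijj'}+k_{ij'j}=1$ holds.
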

\begin{proof}
	Problem~\ref{problem:job-shop} and Problem~\ref{problem:milp} have an identical instance $I=\{\mathbf{x}(0),\Theta\}$. Thus, we need to show that $I\in$~Problem~\ref{problem:job-shop} if and only if $I\in$~Problem~\ref{problem:milp}. We will prove that $I\in$~Problem~\ref{problem:job-shop} if $I\in$~Problem~\ref{problem:milp}, and $I\notin$~Problem~\ref{problem:job-shop} if $I\notin$~Problem~\ref{problem:milp}. 
	
	Suppose $I\in$~Problem~\ref{problem:milp}. Then there exist a feasible solution $(\tilde{\mathbf{T}},\tilde{\mathbf{p}},\tilde{\mathbf{k}})$ where $\tilde{\mathbf{T}}=\{\tilde{T}_{ij}:\forall(i,j)\in\mathcal{N}\}$, $\tilde{\mathbf{p}}=\{\tilde{p}_{ij}:\forall(i,j)\notin\mathcal{L}\}$, and $\tilde{\mathbf{k}}=\{\tilde{k}_{ijj'},\tilde{k}_{ij'j}:\forall(i,j)\leftrightarrow (i,j')\in\mathcal{D}\}$. 
	
	For $(i,j)\in\mathcal{F}$, $R_{ij}=(\alpha_{ij}-y_j(0))/u_{j,max}$ and $D_{ij}=(\alpha_{ij}-y_j(0))/u_{j,min}$ by definition \eqref{definition:RD_in_F}. For $(i,j)\notin\mathcal{F}$, that is, $\exists (i',j)\rightarrow (i,j)\in\mathcal{C}$, there is the constraint in definition~\eqref{definition:RD_notin_F} that $y_j(\tilde{T}_{i'j})=\alpha_{i'j}$. Thus, $R_{ij}$ and $D_{ij}$ are as follows.
	\begin{align*}
		R_{ij}=\tilde{T}_{i'j}+\frac{\alpha_{ij}-\alpha_{i'j}}{u_{j,max}}=\tilde{T}_{i'j}+\tilde{p}_{i'j}+\frac{\alpha_{ij}-\beta_{i'j}}{u_{j,max}}, \\	D_{ij}=\tilde{T}_{i'j}+\frac{\alpha_{ij}-\alpha_{i'j}}{u_{j,min}}=\tilde{T}_{i'j}+\tilde{p}_{i'j}+\frac{\alpha_{ij}-\beta_{i'j}}{u_{j,min}}.
	\end{align*}
	The second equalities in both equations result from Constraint \ref{constraint3:process}. Therefore, Constrains \ref{constraint1:firstoperation} and \ref{constraint2:sequential} imply $R_{ij}\leq \tilde{T}_{ij}\leq D_{ij}$ for all $(i,j)\in\mathcal{N}$ (Constraint \eqref{constraint1:physical}).
		
	In Constraint \ref{constraint4:disjunctive}, we have $P_{ij}\leq \tilde{T}_{ij}+\tilde{p}_{ij}$ because $P_{ij}$ is the minimum time to reach $\beta_{ij}$. Therefore, we have $P_{ij}\leq \tilde{T}_{ij}+\tilde{p}_{ij}\leq \tilde{T}_{ij'}+M(1-\tilde{k}_{ijj'})$. Similarly, $P_{ij'}\leq \tilde{T}_{ij'}+\tilde{p}_{ij'}\leq \tilde{T}_{ij}+M(1-\tilde{k}_{ij'j})$ (Constraint~\eqref{constraint:BI_mip}).
	
	Thus, $\tilde{\mathbf{T}}$ satisfies the constraints in Problem~\ref{problem:job-shop}. That is, $I\in$ Problem~\ref{problem:job-shop}.

	Suppose $I\notin$~Problem~\ref{problem:milp}. Notice that if Constraint \ref{constraint3:process} is ignored and let $p_{ij}=0$, the problem is always feasible because for $(i_1,j)\in\mathcal{F}$ and $(i_1,j)\rightarrow (i_2,j),\ldots, (i_{d-1},j)\rightarrow (i_d,j)\in\mathcal{C}$,
	\begin{align*}
	&T_{i_1,j}=\frac{\alpha_{i_1j}-y_j(0)}{u_{j,max}},~T_{i_2j}=T_{i_1j}+\frac{\alpha_{i_2j}-\beta_{i_1j}}{u_{j,max}},\\
	&\hspace{1 in}\ldots,~ T_{i_dj}=T_{i_{d-1}j}+\frac{\alpha_{i_dj}-\beta_{i_{d-1}j}}{u_{j,max}}
	\end{align*} becomes a feasible solution for any $j$. Constraint~\ref{constraint4:disjunctive} is also satisfied because either $T_{ij}\leq T_{i'j}$ or $T_{i'j}\leq T_{ij}$ is always true. We can thus find the maximum process time that is a feasible solution for the problem without Constraint~\ref{constraint3:process}. Since $I\notin$~Problem~\ref{problem:milp}, this solution violates Constraint~\ref{constraint3:process}. Thus, there is no $p_{ij}\geq (\beta_{ij}-\alpha_{ij})/{u_{j,max}}$ for any $(i,j)\notin\mathcal{L}$ such that Constraints \ref{constraint1:firstoperation}, \ref{constraint2:sequential}, and \ref{constraint4:disjunctive} are satisfied. This, in turn, implies that given the definition $P_{ij}=T_{ij}+\min p_{ij}$, there is no $P_{ij}\geq T_{ij}+(\beta_{ij}-\alpha_{ij})/u_{j,max}$ such that the constraints in Problem~\ref{problem:job-shop} are satisfied. Since $P_{ij}$ is not feasible, neither are $T_{ij}$ and $k_{ijj'}$. Thus, $I\notin$ Problem~\ref{problem:job-shop}.
\end{proof}

We solve Problem~\ref{problem:milp} using CPLEX. The procedure that solves Problem~\ref{problem:milp} given an instance $I=\{\mathbf{x}(0),\Theta\}$ is referred to as \texttt{Jobshop}$(I)$. If $I\in$ Problem~\ref{problem:milp}, that is, $I\in$ Problem~\ref{problem:verification} by Theorems~\ref{theorem:P1_equivalent_P3} and \ref{theorem:P3_equivalent_P4}, $\mathtt{Jobshop}(I)$ returns $\{\mathbf{T,p},yes\}$. Otherwise, it returns $\{\emptyset,\emptyset,no\}$. 

\subsection{Solution of Problem~\ref{problem:supervisor}}\label{section:supervisor}
 The supervisor runs in discrete time with a time step $\tau$. At time $k\tau$ where $k>0$, it receives the measurements of the states $\mathbf{x}(k\tau)$ and drivers' inputs $\mathbf{u}_{driver}^k\in\mathbf{U}$ of the vehicles near an intersection. By assuming that $\mathbf{u}_{driver}^k$ is constant for time $[k\tau, (k+1)\tau)$, we predict a state at the next time step, called a state prediction and denoted by $\hat{\mathbf{x}}(\mathbf{u}_{driver}^k)$, as follows.
$$\hat{\mathbf{x}}(\mathbf{u}_{driver}^k) = \mathbf{x}(\tau, \mathbf{u}_{driver}^k, \mathbf{x}(k\tau)).$$
 
 Notice that $\mathtt{Jobshop}(\hat{\mathbf{x}}(\mathbf{u}_{driver}^k),\Theta)$ determines whether or not collisions can be avoided at all future time given the state prediction. If it returns $\{\mathbf{T,p},yes\}$, then the supervisor allows the vehicles to drive with input $\mathbf{u}_{driver}^k$ for time $[k\tau, (k+1)\tau)$. The schedule $\mathbf{T}$ and the process time $\mathbf{p}$ are used to generate a safe input signal $\mathbf{u}_{safe}^{k+1,\infty}$, defined on time $[(k+1)\tau,\infty)$. We define a safe input operator $\sigma(\hat{\mathbf{x}}(\mathbf{u}_{driver}^k),\mathbf{T,p})$ as follows.
 \begin{align}
 \begin{split}\label{equation:safe_input}
 &\sigma(\hat{\mathbf{x}}(\mathbf{u}_{driver}^k),\mathbf{T,p})\\
 &\in \{(u_1,\ldots,u_n)\in\mathcal{U}: y_j(T_{ij},u_j,\hat{x}_j(u_{driver,j}^k))=\alpha_{ij}\\
 &\hspace{0.9 in}\text{and}~y_j(p_{ij},u_j,\alpha_{ij})=\beta_{ij}~\forall (i,j)\in\mathcal{N}\},
 \end{split}
 \end{align}
 where $u_{driver,j}^k$ is the $j^{th}$ entry of $\mathbf{u}_{driver}^k$, and $\hat{x}_j(u_{driver,j}^k)$ is the $j^{th}$ entry of $\hat{\mathbf{x}}(\mathbf{u}_{driver}^k)$. This safe input signal is stored for possible uses at the next time step.
 
 If $\mathtt{Jobshop}(\hat{\mathbf{x}}(\mathbf{u}_{driver}^k),\Theta)$ returns $\{\emptyset,\emptyset,no\}$, then the supervisor overrides the vehicles using the safe input signal stored at the previous step, $\mathbf{u}_{safe}^{k,\infty}$. Since $\mathbf{u}_{safe}^{k,\infty}$ is defined on time $[k\tau, \infty)$, let $\mathbf{u}_{safe}^{k}\in\mathcal{U}$ be $\mathbf{u}_{safe}^{k,\infty}$ restricted to time $[k\tau, (k+1)\tau)$. The supervisor blocks the drivers' inputs $\mathbf{u}_{driver}^k$ and returns the safe input $\mathbf{u}_{safe}^k$ for time $[k\tau, (k+1)\tau)$ to prevent future collisions.

This procedure is written as an algorithm as follows.
\begin{algorithm}[H]
	\caption{$\mathtt{Supervisor}(\mathbf{x}(k\tau),\mathbf{u}_{driver}^k)$}
	\label{algorithm:supervisor}
	\begin{algorithmic}[1]
		\State $\{\mathbf{T}_1,\mathbf{p}_1,answer_1\}=\mathtt{Jobshop}(\hat{\mathbf{x}}(\mathbf{u}_{driver}^k),\Theta)$\label{algorithm:solve_verification1}
		\If{$answer_1=yes$}
		\State $\mathbf{u}^{k+1,\infty}\gets \sigma(\hat{\mathbf{x}}(\mathbf{u}_{driver}^k),\mathbf{T}_1,\mathbf{p}_1)$\label{algorithm:1_safe_input}
		\State $\mathbf{u}^{k+1}_{safe}\gets \mathbf{u}^{k+1,\infty}(t)$ for $t\in [(k+1)\tau, (k+2)\tau)$\label{algorithm:1_safe_input2}
		\State \textbf{return} $\mathbf{u}_{driver}^k$\label{algorithm:doesnot_intervene}
		\Else
		\State $\{\mathbf{T}_2,\mathbf{p}_2,answer_2\} = \mathtt{Jobshop}(\hat{\mathbf{x}}(\mathbf{u}_{safe}^k),\Theta)$\label{algorithm:verification_again}
		\State $\mathbf{u}^{k+1,\infty}\gets \sigma(\hat{\mathbf{x}}(\mathbf{u}_{safe}^k),\mathbf{T}_2,\mathbf{p}_2)$\label{algorithm:2_safe_input}
		\State $\mathbf{u}_{safe}^{k+1}\gets \mathbf{u}^{k+1,\infty}(t)~ \text{for}~ t\in[(k+1)\tau, (k+2)\tau)$\label{algorithm:2_safe_input2}
		\State \textbf{return} $\mathbf{u}_{safe}^k$\label{algorithm:override}
		\EndIf
	\end{algorithmic}
\end{algorithm}
If $answer_1=yes$, then the supervisor generates and stores the safe input $\mathbf{u}_{safe}^{k+1}$ in lines~\ref{algorithm:1_safe_input}-\ref{algorithm:1_safe_input2}, and does not intervene in line~\ref{algorithm:doesnot_intervene}. If $answer_1=no$, the supervisor solves the verification problem in line~\ref{algorithm:verification_again} given the state predicted with the safe input $\mathbf{u}_{safe}^k$. It will be proved in the following theorem that $answer_2$ is always $yes$, which implies the non-blocking property of the supervisor. Based on $\mathbf{T}_2$ and $\mathbf{p}_2$, the supervisor generates and stores the safe input $\mathbf{u}_{safe}^{k+1}$ in lines~\ref{algorithm:2_safe_input}-\ref{algorithm:2_safe_input2}, and overrides the vehicles in line~\ref{algorithm:override}.

\begin{theorem}
	Algorithm~\ref{algorithm:supervisor} solves Problem~\ref{problem:supervisor}.
\end{theorem}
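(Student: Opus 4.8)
The plan is to verify that Algorithm~\ref{algorithm:supervisor} meets Spec~\ref{specification1} and Spec~\ref{specification2} of Problem~\ref{problem:supervisor} in turn, using the equivalences established in Theorems~\ref{theorem:P1_equivalent_P3} and \ref{theorem:P3_equivalent_P4} to move freely between the verification problem, the job-shop scheduling problem, and the MILP solved by \texttt{Jobshop}. The key observation tying everything together is that, by those theorems, $\mathtt{Jobshop}(\hat{\mathbf{x}}(\mathbf{u}),\Theta)$ returns $\{\mathbf{T},\mathbf{p},yes\}$ if and only if $\hat{\mathbf{x}}(\mathbf{u})=\mathbf{x}(\tau,\mathbf{u},\mathbf{x}(k\tau))\in$~Problem~\ref{problem:verification}, i.e. iff there exists an input keeping the system out of $B$ for all future time starting from that predicted state; and that when $yes$ is returned, the operator $\sigma$ in \eqref{equation:safe_input} extracts from $(\mathbf{T},\mathbf{p})$ an explicit input signal realizing that schedule, which by the second half of the proof of Theorem~\ref{theorem:P1_equivalent_P3} keeps $\mathbf{y}(\cdot)\notin B$.

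First I would show Spec~\ref{specification1}. If $answer_1=yes$ in line~\ref{algorithm:solve_verification1}, the algorithm returns $\mathbf{u}_{driver}^k$ (line~\ref{algorithm:doesnot_intervene}); by the equivalence this is exactly the case in which some $\mathbf{u}\in\mathcal{U}$ satisfies $\mathbf{y}(t,\mathbf{u},\mathbf{x}(\tau,\mathbf{u}_{driver}^k,\mathbf{x}(k\tau)))\notin B$ for all $t\ge 0$, which is the condition under which Spec~\ref{specification1} requires returning $\mathbf{u}_{driver}^k$. If $answer_1=no$, the algorithm returns $\mathbf{u}_{safe}^k$ (line~\ref{algorithm:override}), the restriction to $[k\tau,(k+1)\tau)$ of the signal $\mathbf{u}_{safe}^{k,\infty}$ produced at the previous step. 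I must check that this $\mathbf{u}_{safe}^k$ satisfies the defining property \eqref{equation:safe_input_definition}: that there exists $\mathbf{u}'$ with $\mathbf{y}(t,\mathbf{u}',\mathbf{x}(\tau,\mathbf{u}_{safe}^k,\mathbf{x}(k\tau)))\notin B$ for all $t\ge 0$. This is where the stored signal comes in — $\mathbf{u}_{safe}^{k,\infty}$ was obtained via $\sigma$ from a schedule that \texttt{Jobshop} certified feasible for the state predicted one step earlier, so by construction $\mathbf{u}_{safe}^{k,\infty}$ itself, continued past $(k+1)\tau$, keeps the output out of $B$; restricting to one step and then using the tail as $\mathbf{u}'$ (together with the order-preserving/continuity properties to re-anchor at the realized state $\mathbf{x}(\tau,\mathbf{u}_{safe}^k,\mathbf{x}(k\tau))$) gives \eqref{equation:safe_input_definition}.

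Next I would establish Spec~\ref{specification2}, non-blocking, which amounts to proving the claim flagged in the text that $answer_2$ in line~\ref{algorithm:verification_again} is always $yes$. The argument is inductive: suppose $\mathbf{u}_{safe}^k$ exists and satisfies \eqref{equation:safe_input_definition}. Whether $answer_1$ is $yes$ or $no$, the algorithm computes a fresh certified schedule — either $(\mathbf{T}_1,\mathbf{p}_1)$ from $\hat{\mathbf{x}}(\mathbf{u}_{driver}^k)$ or $(\mathbf{T}_2,\mathbf{p}_2)$ from $\hat{\mathbf{x}}(\mathbf{u}_{safe}^k)$ — and stores $\mathbf{u}_{safe}^{k+1}$ via $\sigma$. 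In the $answer_1=no$ branch, the call $\mathtt{Jobshop}(\hat{\mathbf{x}}(\mathbf{u}_{safe}^k),\Theta)$ returns $yes$ precisely because, by \eqref{equation:safe_input_definition} applied to $\mathbf{u}_{safe}^k$, the predicted state $\hat{\mathbf{x}}(\mathbf{u}_{safe}^k)=\mathbf{x}(\tau,\mathbf{u}_{safe}^k,\mathbf{x}(k\tau))$ is an instance in Problem~\ref{problem:verification}, hence (Theorems~\ref{theorem:P1_equivalent_P3}, \ref{theorem:P3_equivalent_P4}) in Problem~\ref{problem:milp}; so $answer_2=yes$ and $\mathbf{u}_{safe}^{k+1}$ is well-defined and, by the same $\sigma$-realizes-the-schedule reasoning as above, satisfies \eqref{equation:safe_input_definition} at step $k+1$. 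The base case is the hypothesis that $\mathbf{u}_{safe}^0$ exists. I would close by remarking that the induction carries the invariant ``a valid $\mathbf{u}_{safe}^k$ satisfying \eqref{equation:safe_input_definition} is available at the start of step $k$'' forward for all $k>0$.

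The main obstacle I anticipate is the bookkeeping around the one-step shift between prediction and realization: $\sigma$ anchors the safe signal at the \emph{predicted} state $\hat{\mathbf{x}}(\cdot)$, but Spec~\ref{specification1} and Spec~\ref{specification2} are phrased in terms of the state $\mathbf{x}(\tau,\cdot,\mathbf{x}(k\tau))$ actually reached, and one must argue these coincide (they do, since $\hat{\mathbf{x}}(\mathbf{u})=\mathbf{x}(\tau,\mathbf{u},\mathbf{x}(k\tau))$ by definition) and that concatenating the applied one-step input with the stored tail yields an admissible signal in $\mathcal{U}$ whose output avoids $B$ — this last point leans on the order-preserving property of \eqref{equation:dynamics} and on $\sigma$ returning a genuine element of $\mathcal{U}$ rather than just a schedule, which is exactly what Lemma~\ref{lemma:existence_input} guarantees. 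Everything else is a direct translation through the already-proved equivalences.
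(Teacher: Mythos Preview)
Your proposal is correct and follows essentially the same structure as the paper's proof: verify Spec~\ref{specification1} directly from the algorithm's design together with the equivalences of Theorems~\ref{theorem:P1_equivalent_P3} and~\ref{theorem:P3_equivalent_P4}, then establish Spec~\ref{specification2} by induction on $k$, using that the safe input stored at the previous step certifies feasibility of $\mathtt{Jobshop}(\hat{\mathbf{x}}(\mathbf{u}_{safe}^k),\Theta)$ at the current step. The only minor difference is that the paper handles the one-step shift more directly---since $\hat{\mathbf{x}}(\mathbf{u})=\mathbf{x}(\tau,\mathbf{u},\mathbf{x}(k\tau))$ by definition and the tail of the stored safe signal is itself the witness $\mathbf{u}'$, no appeal to the order-preserving property or to Lemma~\ref{lemma:existence_input} is actually needed in this argument.
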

\begin{proof}
	To prove that Algorithm~\ref{algorithm:supervisor} is a solution of Problem~\ref{problem:supervisor}, we check if the algorithm satisfies the specifications in Problem~\ref{problem:supervisor}.
	
	Specification~\ref{specification1} is met by the design of the algorithm. If there exists $\mathbf{u}\in\mathcal{U}$ such that $\mathbf{y}(t,\mathbf{u},\hat{\mathbf{x}}(\mathbf{u}_{driver}^k))\notin B$ for all $t\geq 0$, then $\mathtt{Jobshop}(\hat{\mathbf{x}}(\mathbf{u}_{driver}^k),\Theta)$ returns \textit{yes}. In this case, the supervisor returns $\mathbf{u}_{driver}^k$. Otherwise, it returns $\mathbf{u}_{safe}^k\in\mathcal{U}$. The fact that this input makes $\mathtt{Jobshop}(\hat{\mathbf{x}}(\mathbf{u}_{safe}^k),\Theta)$ return \textit{yes} will be clear in the proof of the non-blocking property. 
	
	To prove the non-blocking property, we use mathematical induction on $k$ where $t=k\tau$. At $t=0$, we assume $\mathbf{u}_{safe}^0 \neq \emptyset$. At $t=(k-1)\tau$, suppose there exists $\mathbf{u}_{safe}^{k-1}$. That is, by definition, there exists $\mathbf{u}'\in\mathcal{U}$ such that $\mathbf{y}(t,\mathbf{u}',\hat{\mathbf{x}}(\mathbf{u}_{safe}^{k-1}))\notin B$ for all $t\geq 0$. If $\mathtt{Jobshop}(\hat{\mathbf{x}}(\mathbf{u}_{driver}^{k-1}),\Theta)$ returns \textit{yes}, then then there exists $\mathbf{u}\in\mathcal{U}$ such that $\mathbf{y}(t,\mathbf{u},\hat{\mathbf{x}}(\mathbf{u}_{driver}^{k-1}))\notin B$ for all $t\geq 0$ by Problem~\ref{problem:verification}. 
	
	Now at $t=k\tau$, we want to prove that there exists $\mathbf{u}_{safe}^k$. Notice that $\mathbf{x}(k\tau)$ is either $\hat{\mathbf{x}}(\mathbf{u}_{driver}^{k-1})$ or $\hat{\mathbf{x}}(\mathbf{u}_{safe}^{k-1})$. In the former case, let $\mathbf{u}^{k}$ be $\mathbf{u}$ restricted to time $[k\tau, (k+1)\tau)$, and $\mathbf{u}^{k+1,\infty}$ be $\mathbf{u}$ restricted to time $[(k+1)\tau, \infty)$. Then, we have $$\mathbf{y}(t,\mathbf{u}^{k+1,\infty},{\mathbf{x}}(\tau,\mathbf{u}^k,\hat{\mathbf{x}}(\mathbf{u}_{driver}^{k-1})))\notin B.$$ Thus there exists $\mathbf{u}_{safe}^k = \mathbf{u}^k$. Similarly for the latter case, let $\mathbf{u}'^k$ be $\mathbf{u}'$ restricted to time $[k\tau, (k+1)\tau)$, and $\mathbf{u}'^{k+1,\infty}$ be $\mathbf{u}'$ restricted to time $[(k+1)\tau,\infty)$. Then, we have
	$$\mathbf{y}(t,\mathbf{u}'^{k+1,\infty},\mathbf{x}(\tau,\mathbf{u}'^k,\hat{\mathbf{x}}(\mathbf{u}_{safe}^{k-1})))\notin B.$$ Thus there exists $\mathbf{u}_{safe}^k = \mathbf{u}'^k$. Therefore, in any case, there exists a safe input $\mathbf{u}_{safe}^k$.
	
	If $\mathbf{u}_{safe}^0$ exists, there exists $\mathbf{u}_{safe}^k$ for any $k>0$. The supervisor is thus, non-blocking.
	\end{proof}
	\begin{figure*}[htb!]
	\centering
	\subfigure[Bad set]{	
		\includegraphics[width=0.99\columnwidth]{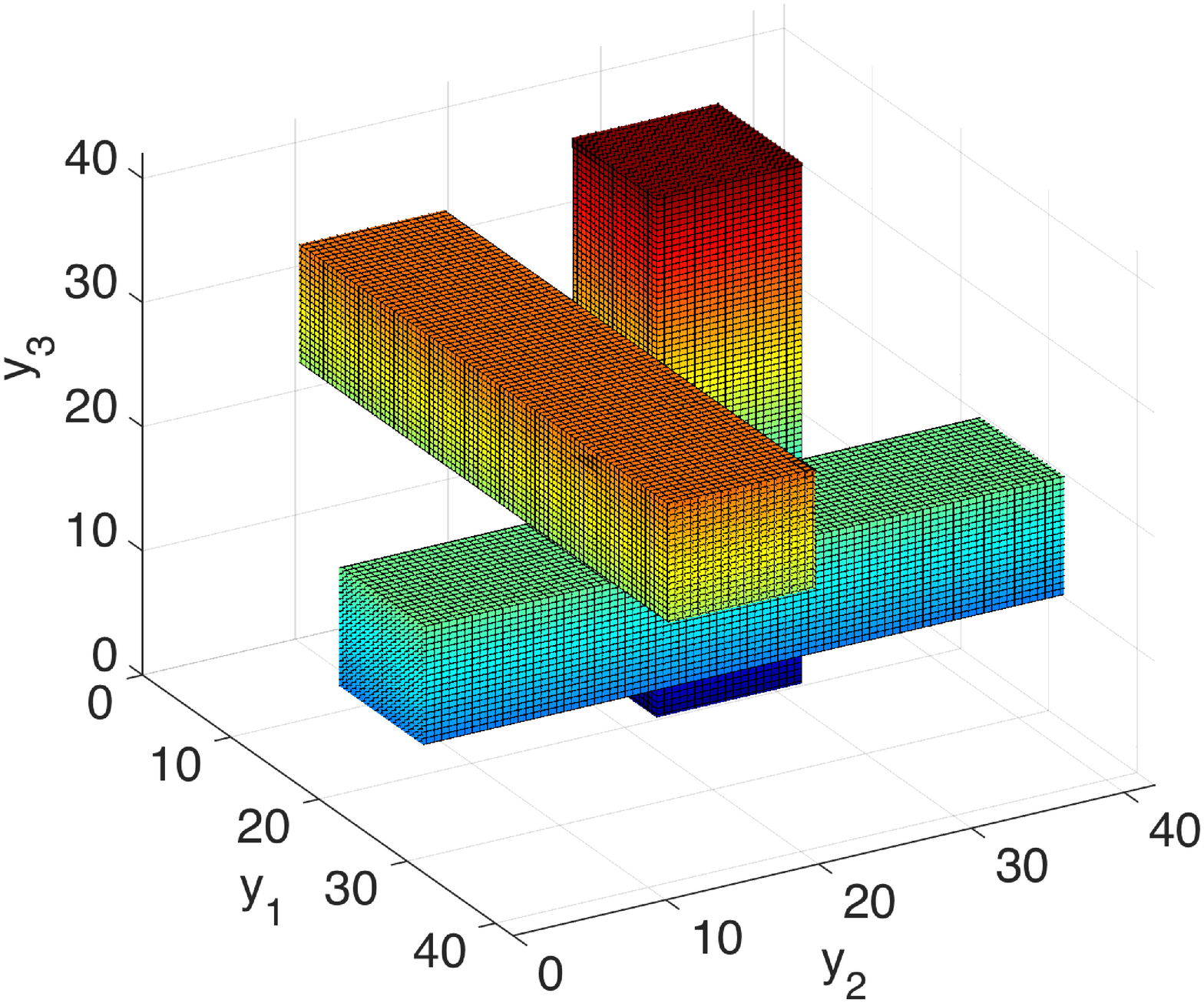}
		\label{figure:badset}}
	\quad
	\subfigure[Capture set]{	
		\includegraphics[width=0.99\columnwidth]{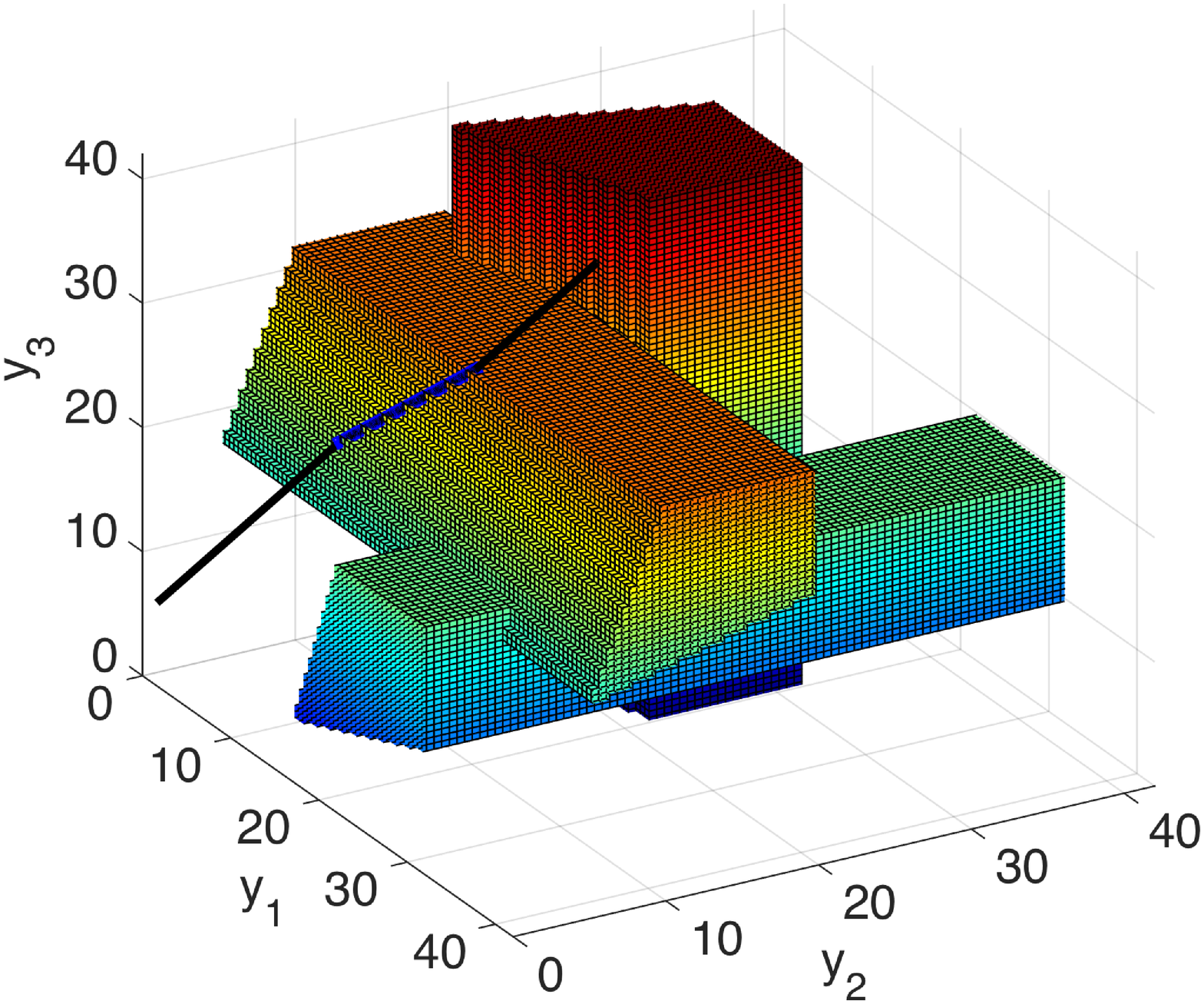}
		\label{figure:captureset}}
	
	\caption{Position space of the three vehicles in the scenario of Figure~\ref{figure:scenario}. Subfigure (a) shows the bad set defined in \eqref{equation:badset}, and subfigure (b) shows the resulting capture set defined in \eqref{equation:capture_Set}. In (b), the black line is the trajectory of the system, and the blue thick line highlights the positions at times when the supervisor overrides the vehicles. Notice that the supervisor prevents them from entering the capture set, thereby averting collision.}
	\label{figure:3D}
	\end{figure*}

\section{Simulation Results}\label{section:simulation_results}
This section presents simulation results of the supervisor. In particular, considering the intersection scenarios illustrated in Figures~\ref{figure:general_intersection} and \ref{figure:scenario}, we validate that the supervisor prevents impending collisions by minimally overridng vehicles. Also, the simulations illustrate that for a system with a large number of vehicles, the computation time required for the supervisor algorithm (Algorithm~\ref{algorithm:supervisor}) at each step is within the allotted 100 ms.

 We implement Algorithm~\ref{algorithm:supervisor} using MATLAB, in which mixed-integer programming in Problem~\ref{problem:milp} is solved by using CPLEX. To speed up the process of generating the constraints of the problem, MATLAB Coder\texttrademark \cite{Matlab_coder_2015} is used to replace the code written in MATLAB with the C code and compile it into a MATLAB executable function. Simulations are performed on a personal computer, which runs Windows 7 Home Premium and consists of an Intel Core i7-3770s processor at 3.10 GHz and 8 GB random-access memory.

Consider first Figure~\ref{figure:scenario}, in which three vehicles are approaching the intersection containing three conflict points. The parameters used in the simulations are $\tau=$0.1, $U_j=[0.1,0.3]$ for all $j\in\{1,\ldots,n\}$, $(\alpha_{ij},\beta_{ij})=$(10,20) for $(i,j)\in\mathcal{F}$, and $(\alpha_{ij}, \beta_{ij})=(\alpha_{i'j} + 22,\alpha_{ij} + 10)$ for $(i,j)\notin\mathcal{F}$, where $(i',j)\rightarrow(i,j)\in\mathcal{C}$. 

To solve the verification problem (Problem~\ref{problem:verification}), the work in \cite{hafner_cooperative_2013} considers the set of initial states such that no input exists to avoid a collision. This subset of the state space is called the \textit{capture set} and defined as follows.
	\begin{align}\label{equation:capture_Set}
	\mathcal{CS}:=\{\mathbf{x}\in \mathbf{X}: \forall \mathbf{u}\in\mathcal{U},~\exists t~\text{such that}~\mathbf{y}(t,\mathbf{u},\mathbf{x})\in B\}.
	\end{align}
	The capture set resulting from the bad set in Figure~\ref{figure:badset} is shown in Figure~\ref{figure:captureset}.
Given an instance $I=\{\mathbf{x}(0),\Theta\}$ of Problem~\ref{problem:verification}, $I\notin$~Problem~\ref{problem:verification} if and only if $\mathbf{x}(0)\in \mathcal{CS}$ by definition. By Theorems~\ref{theorem:P1_equivalent_P3} and \ref{theorem:P3_equivalent_P4}, if $\mathbf{x}(0)\in\mathcal{CS}$, $I\notin$~Problems~\ref{problem:job-shop} and \ref{problem:milp}.

In Figure~\ref{figure:captureset}, the black line represents the trajectory of the system given an initial condition $\mathbf{x}(0)=$(-2.8,-3.7,-1.2). When the supervisor overrides the vehicles, the trajectory is shown in blue. The drivers' inputs are set to be $\mathbf{u}_{driver}^k=(0.15, 0.11, 0.25)$ and constant for all $k\geq 0$ where $t=k\tau$, so that without override actions of the supervisor, the trajectory would enter the bad set in Figure~\ref{figure:badset}. Notice that the supervisor overrides the vehicles right before the trajectory enters the capture set and makes the trajectory ride on the boundary of the capture set. The drivers regain the control of their vehicles once the dangerous situation is resolved. This confirms that the supervisor is least restrictive because it intervenes only when the state prediction $\hat{\mathbf{x}}(\mathbf{u}_{driver}^k)$ enters the capture set. The computation of the supervisor algorithm (Algorithm~\ref{algorithm:supervisor}) takes less than 4 ms per iteration in the worst case.

\begin{figure}[htb!]
	\includegraphics[width = \columnwidth]{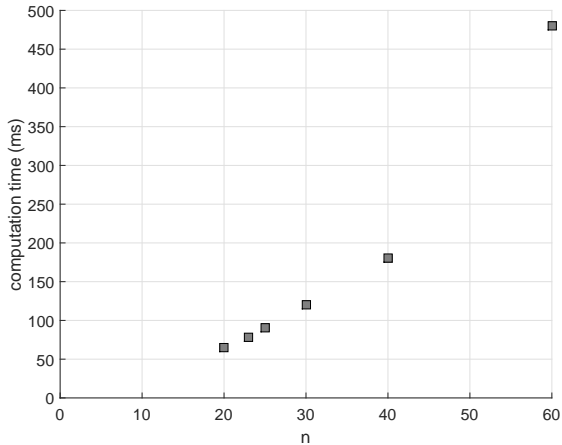}
	\caption{Computation time for one iteration of  Algorithm~\ref{algorithm:supervisor} in the worst case with respect to the number of vehicles. }
	\label{figure:complexity}
\end{figure}
We then run Algorithm~\ref{algorithm:supervisor} for the intersection instance shown in Figure~\ref{figure:general_intersection}, which contains twenty vehicles and forty eight conflict points. Then, we inserted additional vehicles per road (far enough so to ensure that rear-end collsions do not occur) to determine how many vehicles the supervisor can handle within the 100 ms. In Figure~\ref{figure:complexity}, the computation time required for one iteration of Algorithm~\ref{algorithm:supervisor} is shown with respect to the number of vehicles. Notice that as the number of vehicles increases, the computation time increases exponentially. Although the problem is not scalable, about twenty five vehicles can be managed by the supervisor within the 100 ms even in the complicated intersection scenario. 

The intersection scenario of Figure~\ref{figure:general_intersection} is created from the top 20 crash intersection locations in the report of the Massachusetts Department of Transportation \cite{MassDOT_2012_Topcrash} such that it can represent each intersection topology by removing or combining its lanes. That is, this intersection scenario consisting of twenty lanes and forty eight conflict points is more complicated than the twenty most dangerous intersections in Massachusetts. If we do not consider rear-end collisions and assume that there is only one vehicle per road, the number of vehicles in typical intersection scenarios usually does not exceed twenty. We can thus conclude that this supervisor is practical for typical intersection scenarios. How accounting for rear-end collisions affects computational complexity will be investigated in future work. It is shown in \cite{colombo_least_2014} that additional vehicles on the same lane increase computational complexity less than those on different lanes due to precedence constraints. Since in Figure~\ref{figure:complexity}, we did not consider these precedence constraints, we expect that the computation time will be lower than that shown in Figure~\ref{figure:complexity}.

\section{Conclusions}\label{section:conclusions}
We have designed a supervisor that overrides human-driven vehicles only when a future collision is detected and has a non-blocking property. To this end, we have formulated the verification problem and the job-shop scheduling problem and proved that they are equivalent. To solve the job-shop scheduling problem, we have converted it into a mixed-integer linear programming problem by assuming the single integrator vehicle dynamics. The computer simulations confirm that the supervisor guarantees safety while overriding vehicles only when a future collision is unavoidable otherwise. Also, the computational studies show that despite the combinatorial complexity of the verification problem, the supervisor can deal with a complicated intersection scenario as in Figure~\ref{figure:general_intersection} within the allotted 100 ms per iteration. 

While this paper considers a general intersection model in terms of conflict areas, the inclusion of rear-end collisions in the scenario makes it more practical. Moreover, to account for more realistic dynamic behaviors of vehicles, a nonlinear second-order model will be considered. In particular, for second-order linear dynamics, the job-shop scheduling problem may be reformulated as a mixed-integer quadratic programming problem. Also, as considered in \cite{bruni_robust_2013,ahn_supervisory_2014,ahn_experimental_2015} in which an intersection is modeled as a single conflict area, measurement and process uncertainty and the presence of unequipped vehicles will be investigated in future work. Undetermined routes of vehicles will also be investigated by considering possible decisions of steering inputs.

\section{Acknowledgments}
The authors would like to thank Alessandro Colombo and Gabriel Campos at Politecnico di Milano for the helpful discussions and suggestions. This work was in part supported by NSF CPS Award No. 1239182.

\bibliographystyle{abbrv}
\bibliography{IEEEabrv} 

% ACM needs 'a single self-contained file'!
%
%APPENDICES are optional
%\balancecolumns
%\appendix
%Appendix A

\end{document}